\newtheorem{thm}{Theorem}[section]
\newtheorem{lem}{Lemma}[section]
\newtheorem{prop}{Proposition}[section]
\theoremstyle{definition}
\newtheorem{defn}{Definition}[section]
\theoremstyle{remark}
\newtheorem{rem}{Remark}[section]
\newtheorem{remark}{Remark}
\numberwithin{equation}{section}
\newcommand{\beq}{\begin{equation}}
\newcommand{\eeq}{\end{equation}}
\title[Effective medium theory for embedded sound-soft obstacles in media]
{Effective medium theory for embedded sound-soft obstacles in an anisotropic inhomogeneous medium with applications}
\author[Huaian Diao]{Huaian Diao}
\address{School of Mathematics and Key Laboratory of Symbolic Computation and Knowledge Engineering of Ministry of Education, Jilin University, Changchun 130012, China}
\email{diao@jlu.edu.cn}
\author[Qingle Meng]{Qingle Meng*}
\thanks{*Corresponding author}
\address{Department of Mathematics, City University of Hong Kong, Kowloon Tong, Hong Kong SAR, China}
\email{mengql2021@foxmail.com, qinmeng@cityu.edu.hk}
\author[Zhiying Sun]{Zhiying Sun}
\address{School of Mathematics, Jilin University, Changchun, Jilin, China}
\email{sunzy23@mails.jlu.edu.cn}
\begin{document}

\maketitle

\begin{abstract}
This paper investigates the problem of time-harmonic acoustic scattering in an inhomogeneous medium with a complex topological structure. Specifically, the medium is anisotropic and contains several disjoint sound-soft obstacles. This model commonly arises in the inverse scattering problem of simultaneously recovering the embedded obstacles and the surrounding medium. We propose a novel theoretical framework that demonstrates how embedded obstacles can be effectively approximated by an isotropic and lossy medium with specified physical parameters, wherein the total wave field exhibits decay properties related to these specified material parameters at the boundaries of the obstacles. This mathematical characterization of the wave in an effective medium model can be used to locate the underlying obstacles. Furthermore, we establish rigorous estimates to validate this approximation and provide a concrete example illustrating our theoretical results. Our proposed effective medium theory offers substantial applications within the context of the aforementioned inverse problem.

\medskip
		
\noindent\textbf{Keywords:} anisotropic medium; scattering; sound-soft obstacles; effective medium theory; variational analysis; inverse problem

\medskip

\noindent\textbf{2020 Mathematics Subject Classification:} 34L25; 78A46
\end{abstract}

\section{Introduction}
	\subsection{Mathematical setup}
%
Let $D_1, D_2, \ldots, D_n$ ($n \in \mathbb{N}$) represent $n$ sound-soft obstacles that are pairwise disjoint. These impenetrable obstacles are situated within a bounded Lipschitz domain \( \Omega \subset \mathbb{R}^N\) $(N=2,3)$. The union $G := \bigcup_{i=1}^n D_i$ is such that the complement $\Omega \setminus \overline{G}$ is connected; see Fig.~\ref{Fig1} for a schematic illustration.
 We assume that the medium in $\Omega \setminus \overline{G}$ is inhomogeneous, anisotropic, and lossy, while the medium outside $\Omega$ is homogeneous and free of damping. Additionally, in the domain $\Omega \setminus \overline{G}$, we define a matrix-valued function $A: \Omega \setminus \overline{G} \to \mathbb{C}^{N \times N}$, denoted as $A = (a_{jk})_{j,k=1}^N$, where the entries $a_{jk} \in C^1(\Omega \setminus \overline{G})$ are continuously differentiable functions. The real part of $A$, denoted by $\Re(A)$, is a matrix-valued function whose entries are the real parts of $a_{jk}$, i.e., $\Re(a_{jk})$. Similarly, the imaginary part of $A$, denoted by $\Im(A)$, is a matrix-valued function with entries $\Im(a_{jk})$. Furthermore, we assume that, for each $x \in \Omega \setminus \overline{G}$, the matrices $\Re(A)$ and $\Im(A)$ are symmetric and satisfy the properties outlined in \cite{ref4} for all $\xi\in\mathbb{C}^N$,
	\begin{equation}
		\bar{\xi}\cdot\Im (A) \xi \leq 0\quad\mbox{and}\quad \bar{\xi}\cdot\Re (A) \xi\geq\beta|\xi|^{2},\label{equ:10.1}
	\end{equation} 
	where $\beta$ is a positive constant. Due to the symmetry of $A$, it follows that
	\begin{equation}\label{A2}
    \Re\left(\bar{\xi}\cdot A \xi\right)=\bar{\xi}\cdot \Re(A) \xi\quad\mbox{and}\quad  \Im\left(\bar{\xi}\cdot A \xi\right)=\bar{\xi}\cdot \Im(A) \xi.
    \end{equation}
  
   Denote  
    $$
    c(x)=\begin{cases}
    \widetilde {c}(x) & \mbox{in } \Omega \setminus \overline G, \\
    c_0&  \mbox{in } \mathbb R^N \setminus \overline \Omega
    \end{cases}\quad \mbox{and}\quad \sigma(x) =\begin{cases}
    \widetilde {\sigma}(x) & \mbox{in } \Omega \setminus \overline G, \\
    0&  \mbox{in } \mathbb R^N \setminus \overline \Omega,
    \end{cases}
    $$
where $c(x)$ is the wave speed in the acoustic medium $\Omega\setminus \overline G $ and homogenous background respectively,  and the nonnegative function $\sigma(x)$ is the damping coefficient of the medium $\Omega\setminus \overline G $. Here $\widetilde {c}(x)\in L^\infty(\Omega \setminus \overline G )$, $\widetilde {\sigma}(x)\in L^\infty(\Omega \setminus\overline G )$, and $c_0\in \mathbb R_+$ is a constant. 

Our study is motivated by the investigation of acoustic wave propagation in an anisotropic medium $\Omega \setminus \overline G$ embedded within an isotropic background medium $\mathbb R^N \setminus \Omega$, where $G$ is assume to be sound-soft obstacles. This phenomenon is governed by the wave equation (see, e.g., \cite{ref2, LAX1967}):
\begin{equation}\label{eq:waveequation}
\frac{1}{c^2(x)}\frac{\partial ^2\widetilde{u}(x,t)}{\partial t ^2}+\sigma(x)\frac{\partial \widetilde{u}}{\partial t}-\nabla\cdot \left(\widetilde A(x)\nabla \widetilde{u}\right)=\widetilde{f}(x,t),\quad x\in \mathbb{R}^N\backslash\overline{G},\quad t>0,
\end{equation}
where \(\widetilde{u}(x,t)\) is the wave field, and \(\widetilde{f}(x,t)\) is the source supported in \(\Omega\) with respect to the spatial variable \(x\). Here
$$
\widetilde A(x)=\begin{cases}
	A(x) & \mbox{in } \Omega \setminus \overline G,\\
	I & \mbox{in } \mathbb R^N \setminus\overline \Omega.
\end{cases}
$$

For the scattering of time-harmonic waves by obstacles within the medium described above, the time-harmonic wave field and source term are assumed to have the form
 \begin{equation}\label{eq:time-harmonic}
 \widetilde{u}(x,t)=\Re\{u(x)e^{-i\omega t}\}\quad\mbox{and}\quad \widetilde{f}(x,t)=\Re\{(-f(x))e^{-i\omega t}\},
\end{equation}
where $\omega\in \mathbb{R}_+$ denotes the angular frequency. By substituting these expressions into \eqref{eq:waveequation} and simplifying, the physical constraints on the density function $A(x)$, wave speed $c(x)$, and damping coefficient $\sigma(x)$ naturally lead to the formulation of $\gamma(x)$ and $q(x)$ as follows
	\begin{equation}\label{eq:qq1}
\gamma(x)=\begin{cases}A(x)&x\in\Omega\backslash\overline{G},\\[2ex]1&x\in\mathbb{R}^N\backslash\overline{\Omega}\end{cases}\quad \mbox{and}\quad 
q(x)=\begin{cases}\eta (x)+i\tau (x) &x\in\Omega\backslash\overline{G},\\[2ex]1&x\in\mathbb{R}^N\backslash\overline{\Omega},\end{cases} 
    \end{equation}
    where 
    $$
    \mbox{$\eta(x) = \frac{c_0^2}{\widetilde c^2(x) }  $ and $\tau(x)=\frac{c_0\widetilde \sigma (x) }{\omega}$.}
    $$
    It is clear that $q(x) \in L^{\infty}(\Omega \setminus \overline{G})$ and that there exists a positive constant $\vartheta_0$ satisfying the following conditions for any $x \in \Omega\backslash\overline{G}$
	\begin{equation*}
		\Re q(x) \geq \vartheta_0 \quad \mbox{and} \quad \Im q(x) \geq 0.
	\end{equation*}

%

\begin{figure}[ht]
\centering
\includegraphics[width=0.25\textwidth]{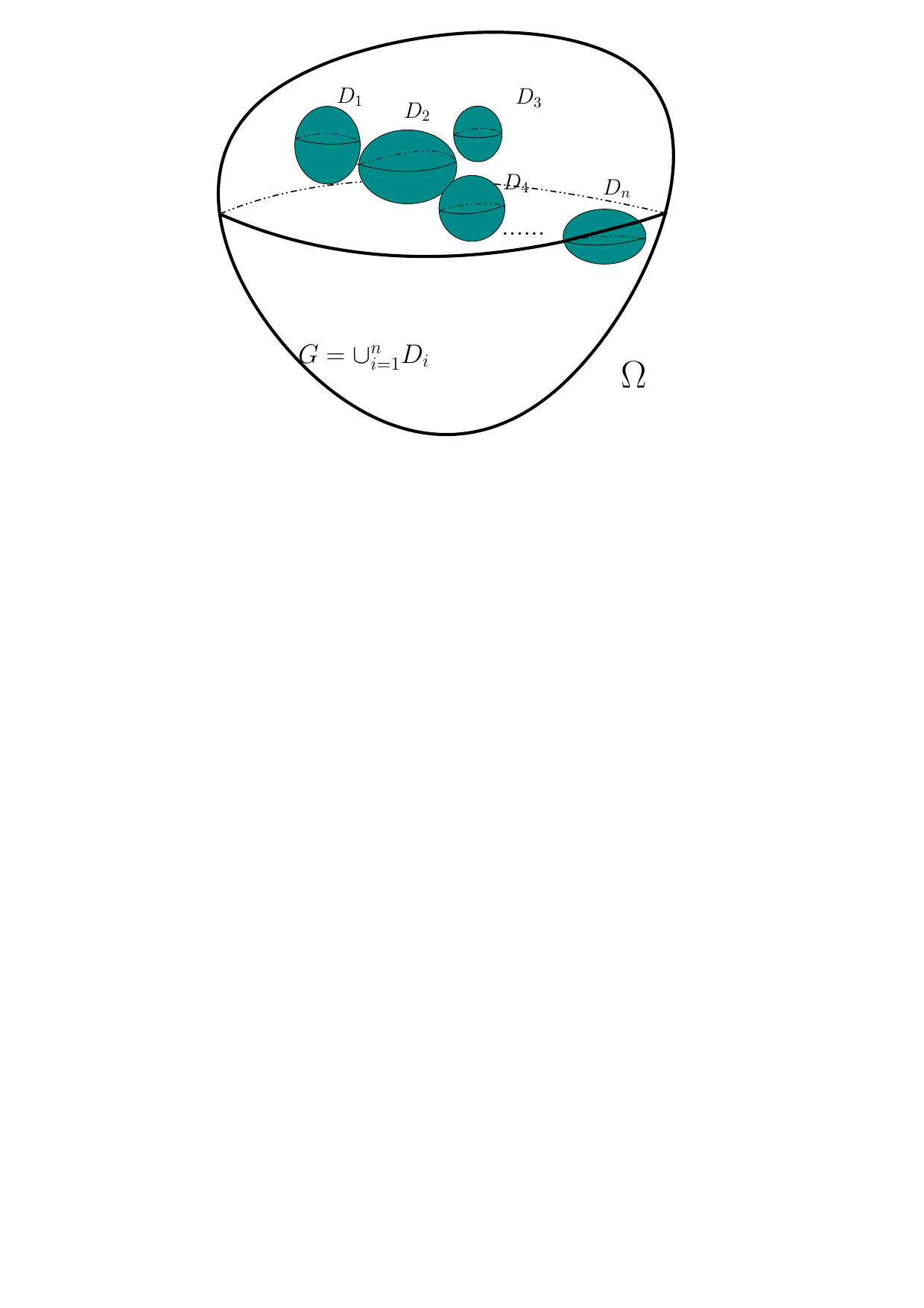}
    \caption{Schematic illustration of the complex scatterer involving the embedded obstacles and the surrounding anisotropic medium in 3D}\label{Fig1}
\end{figure}


 In this paper, we consider the time-harmonic incident field defined as $u^i=e^{ikx\cdot d}$ for $x\in\mathbb{R}^{N}$, with $d\in \mathbb{S}^{N-1}$ denoting the direction of propagation and $k$ being the wave number. Due to the presence of the complex scatterer, denoted by $G\oplus(\Omega\backslash
	\overline{G}; \gamma, q)$, the acoustic scattered field  $u^s\in H_{\mathrm{loc}}^{1}(\mathbb{R}^{N}\backslash\overline{\Omega})$, 
	characterizes the perturbation of the incident field's propagation. The total wave field is expressed as
	$u:=u^i+u^s$.  With all these preparations, substituting \eqref{eq:time-harmonic} into \eqref{eq:waveequation}, using the notation given by \eqref{eq:qq1}, we have the time-harmonic scattering problem associated with the complex scatterer $G\oplus(\Omega\backslash
	\overline{G}; \gamma, q)$,
	\begin{equation}\label{equ:1.2}
		\begin{cases}
			\nabla\cdot(A\nabla u)+k^2q(x)u=0&\text{in}   ~~\Omega\setminus\overline{G}, \\ \Delta u^s+k^2u^s=f&\text{in} ~~\mathbb{R}^N\backslash\overline{\Omega},\\ u=u^{i}+u^s&\text{in} ~~\mathbb{R}^N\backslash\overline{\Omega},\\ u=0&\text{on} ~~\partial G,\\ u^-=u^s+u^{i}\quad\frac{\partial u}{\partial\nu_{A}}=\frac{\partial u^s}{\partial\nu}+\frac{\partial u^{i}}{\partial\nu}&\text{on}~~ \partial\Omega,\\[1.5ex]\lim\limits_{|x| \to \infty}|x|^{(N-1)/2}\big(\frac{\partial u^s}{\partial|x|}-iku^s\big)=0,
		\end{cases}
	\end{equation} 
	where $f(x)$ is a source that is compactly supported outside $\Omega$ (i.e., $\mathrm{supp}f(x)\subset B_{r_0}\backslash\overline\Omega $ for some $r_0>0$). Here, $B_{r_0}$ denotes a ball of radius $r_0$ centered at the origin in $\mathbb{R}^N$, and $\nu$ represents the exterior unit normal to $\partial \Omega$. The notation $\frac{\partial u}{\partial\nu_{A}}:=\lim\limits_{h\to\infty}\nu(x)\cdot A(x)\nabla u\left(x-h\nu(x)\right)$ defines the normal derivative of
	$u$ with respect to the matrix $A$. Clearly, in  (\ref{equ:1.2}), 
	the total field $u$ vanishes on the boundary  $\partial G=\bigcup_{i=1}^{n} \partial D_i.$ The last equation in (\ref{equ:1.2}) is known as the Sommerfeld radiation condition. Furthermore, the solution admits the following asymptotic behavior:
	\begin{equation*}
		u^s(x)=\frac{e^{ik|x|}}{|x|^{(N-1)/2}}\left \{{u^{\infty} (\hat{x},d,k)+\mathcal{O}\left(\frac{1}{|x|}\right)}\right \}\quad \mbox{as}\quad |x|\to\infty,
	\end{equation*}
	uniformly in all directions $\hat{x}:=\frac{x}{|x|}\in\mathbb{S}^{N-1}$, where $d$ is defined previously. $u^{\infty}\left(\hat{x},d,k\right)$  is known as the far-field pattern, also referred to as the scattering amplitude \cite{ref2}.  
	
	\begin{rem}
	The mathematical model presented in (\ref{equ:1.2}) has been studied in the context of cloaking, particularly through the design of invisibility layers (see \cite{KOVW2020, Liu2009, LiuSun2013, Nguyen2010}), where the obstacle $G$ consists of a single component. However, our work differs fundamentally from these studies in two key aspects. First, in this paper we focus on recovering obstacles $G$ embedded in the anisotropic medium $\Omega \setminus \overline{G}$, whereas the cited works aim to cloak a single inclusion. Second, the cloaked inclusion in those studies is typically assumed to be small in size with a simply connected topology, and its cloaking schemes rely heavily on the geometry of the inclusion. In contrast, the sound-soft obstacles $G$ in our study are significantly more general, comprising disjoint obstacles $D_i$ with arbitrary Lipschitz boundaries. Notably, these obstacles $D_i$ can vary widely in scale, ranging from large to standard or small sizes relative to the wavelength of the incident wave.
	\end{rem}

	An inverse scattering problem associated with (\ref{equ:1.2}) involves the simultaneous recovery of buried obstacles, represented by $G\oplus(\Omega\backslash\overline{G}; \gamma, q)$, utilizing the corresponding far-field pattern $u^{\infty} (\hat{x},d,k)$. This problem is formulated as follows:
	\begin{equation} \label{equ:1.4}
		u^\infty(\hat{x};u^{i},f,G\oplus(\Omega\backslash\overline{G};\gamma,q))\to {G}\oplus(\Omega\backslash\overline{G};\gamma,q).
	\end{equation} 
	This paper investigates the inverse problem from a novel perspective that is applicable to a very generic scenario. Specifically, we aim to approximate  (\ref{equ:1.4}) by applying the effective medium theory and carefully selecting physical parameters to treat buried obstacles as a medium. To achieve this, we introduce the following definition.
	\begin{defn}\label{def:1.1}
		Let $u^\infty(\hat{x};u^{i},f,G\oplus(\Omega\backslash\overline{G};\gamma,q))$ be the far-field pattern corresponding to (\ref{equ:1.2}) with sound-soft obstacles. If there exists a medium $(\Omega;\gamma_{\varepsilon(x)},q_{\varepsilon(x)})$ such that $(\gamma_{\varepsilon(x)},q_{\varepsilon(x)})\big|_{\Omega\setminus\overline{G}}=(\gamma,q)\big|_{\Omega\setminus\overline{G}}$ and a small parameter $\varepsilon\in\mathbb{R}_{+}$, with $\varepsilon\ll1$,  that satisfies the following inequality
		\begin{align*}
			&\left\|u^{\infty}(\hat{x};u^{i},f,(\Omega;\gamma_{\varepsilon(x)},q_{\varepsilon(x)})-u^{\infty}(\hat{x};u^{i},f,G\oplus(\Omega\setminus\overline{G};\gamma,q))\right\|_{C(\mathbb{S}^{N-1})} \\ &\quad\leq C\varepsilon\big(\|u^{i}\|_{H^{1}(B_{r_{0}})}+\|f\|_{L^{2}(B_{r_{0}})}\big),
		\end{align*}where $B_{r_0}$ denotes any given ball centered at the origin that contains $\Omega$, and let $C$ be a positive constant depending on the a priori parameters, then $(\Omega;\gamma_{\varepsilon(x)},q_{\varepsilon(x)})$ is referred to as an effective $\varepsilon$-realization of $G\oplus(\Omega\setminus\overline{G})$. Here, $u^{\infty}(\hat{x};u^{i},f,(\Omega;\gamma_{\varepsilon(x)},q_{\varepsilon(x)})$ represents the far-field pattern associated with the scattering problem (\ref{equ:1.2}), where the sound-soft obstacles are replaced by certain penetrable mediums.
	\end{defn}

In what follows, we construct a medium scattering system in which the medium in $G$, with specially chosen material parameters $\gamma_{\varepsilon(x)}$ and $q_{\varepsilon(x)}$, serves as an effective realization of the embedded obstacles in the sense of Definition \ref{def:1.1}. This medium scattering system is rigorously characterized as follows: we seek $u_\varepsilon \in H_{loc}^1(\mathbb{R}^N)$ satisfying
	\begin{equation} \label{equ:1.5}  
		\begin{cases}\nabla\cdot(\gamma_{\varepsilon(x)}\nabla u_{\varepsilon})+k^{2}q_{\varepsilon(x)}u_{\varepsilon}=0&\text{in} ~\Omega,\vspace*{1mm}\\\Delta u_{\varepsilon}^{s}+k^{2}u_{\varepsilon}^{s}=f&\text{in}~ \mathbb{R}^{N}\setminus\overline{\Omega},\\u_{\varepsilon}=u^{i}+u_{\varepsilon}^{s}&\text{in} ~\mathbb{R}^{N}\setminus\overline{\Omega},\\ u_{\varepsilon}^{-}=u_{\varepsilon}^{+},\quad\varepsilon^{-1} \frac{\partial u_{\varepsilon}^{-}}{\partial\nu}=\frac{\partial u_{\varepsilon}^{+}}{\partial\nu_{A}}&\text{on}~\partial G,\vspace*{2mm}\\u_{\varepsilon}^{-}=u_{\varepsilon}^{s}+u^{i},\quad \frac{\partial u_{\varepsilon}^{-}}{\partial\nu_{A}}=\frac{\partial u_{\varepsilon}^{s}}{\partial\nu}+\frac{\partial u^{i}}{\partial\nu}&\text{on}~\partial\Omega,\vspace*{2mm}\\ \lim\limits_{ |x|\to\infty} ~|x|^{(N-1)/2}\left\{\frac{\partial u_{\varepsilon}^{s}}{\partial|x|}-iku_{\varepsilon}^{s}\right\}=0,\end{cases}
	\end{equation}
	where $\gamma_{\varepsilon(x)}$ and $q_{\varepsilon(x)}$ are defined as
	\begin{equation}\label{eq:gamma}
		\gamma_{\varepsilon(x)}=\begin{cases}\varepsilon^{-1}&x\in G,\\[2mm]A(x)&x\in\Omega\backslash\overline{G}\end{cases}\quad \mbox{and}\quad q_{\varepsilon(x)}=\begin{cases}\eta_0+i\varepsilon^{-1}\tau_0&x\in G,\\[2mm]
			\eta(x)+i\tau (x)&x\in\Omega\backslash\overline{G}.\end{cases} 
	\end{equation}
In this context, the notation $u_\varepsilon^-$ and $u_\varepsilon^+$ denote the limits of $u_\varepsilon$ approaching from the interior and the exterior, respectively, at the boundary $\partial G$ or $\partial \Omega$. 
	\begin{remark}
The parameters
 $ \eta_0$ and $\tau_0$ in \eqref{eq:gamma} are consistent with those given in \eqref{eq:qq1}, and $\varepsilon$ is a sufficiently small constant. When considering the obstacle $G$ as a medium, the notations $\gamma_{\varepsilon(x)}$ and $q_{\varepsilon(x)}$  represent the corresponding physical parameters within $\Omega$, where $A(x)$ satisfies \eqref{equ:10.1} and \eqref{A2}. Furthermore, the imaginary part of $q_{\varepsilon(x)}$ within $G$  differs from that of the surrounding medium by $\varepsilon^{-1}$, indicating that the material in $G$ exhibits very large damping. Notably, the behaviors of $\gamma_{\varepsilon(x)}$ and $q_{\varepsilon(x)}$ within $G$ differ significantly from those in $\Omega\backslash\overline{G}$.
\par For the inverse scattering problem  (\ref{equ:1.4}), we shall demonstrate that as $\varepsilon \to 0^+$, the far-field pattern of the solution $u_\varepsilon$ to (\ref{equ:1.5}) effectively approximates the one corresponding to the solution $u$ to (\ref{equ:1.2}), and the solution $u_\varepsilon$ to (\ref{equ:1.5}) exhibits decay on $\partial G$. These results are stated in the following theorems.
	\end{remark} 
	\begin{thm}\label{thm:1.1}
		Assume that $u\in H_{loc}^1(\mathbb{R}^N\backslash\overline{G})$ and $u_\varepsilon\in H_{loc}^1(\mathbb{R}^N)$ are solutions to (\ref{equ:1.2}) and (\ref{equ:1.5}), respectively. Let $B_r$ and $B_{r_0}$ be any central balls satisfying $\Omega\subset B_{r_0} \subset B_r$, then the medium $(G;\gamma_{\varepsilon(x)}|_G,q_{\varepsilon(x)}|_G)$ is  an $\varepsilon^{1/2}$-realization of obstacles $G$ in the sense of Definition \ref{def:1.1}. Namely, the following inequality holds for sufficiently small $\varepsilon<\varepsilon_{0}$, where $\varepsilon_{0}>0$,
		\begin{equation*}
			\|u_\varepsilon^\infty-u^\infty\|_{C(\mathbb{S}^{N-1})}\leq C_1\varepsilon^{1/2}\left(\|u^i\|_{H^1(B_r\setminus\overline{\Omega})}+\|f\|_{L^2(B_{r_0}\setminus\overline\Omega)}\right).
		\end{equation*}
		Here, $C_1$ depends only on $q, \gamma, k,\eta_0,\tau_0,\varepsilon_0, G$, $\Omega$, and $B_r$, but completely independent of $\varepsilon$.
	\end{thm}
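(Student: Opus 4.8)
The plan is to turn the two singular coefficients inside $G$---the large conductivity $\gamma_{\varepsilon(x)}=\varepsilon^{-1}$ and the large damping $\Im q_{\varepsilon(x)}=\varepsilon^{-1}\tau_0$---into an $\Oh(\varepsilon^{1/2})$ bound for $u_\varepsilon$ in $H^1(G)$, so that the Dirichlet trace of $u_\varepsilon$ on $\p G$ collapses to the sound-soft condition satisfied by $u$, and then to propagate this smallness to the far field via the stability of the exterior problem. I would first record a self-contained \emph{localization estimate}. Inside $G$ the equation reads $\varepsilon^{-1}\Delta u_\varepsilon+k^2(\eta_0+i\varepsilon^{-1}\tau_0)u_\varepsilon=0$; multiplying by $\overline{u_\varepsilon}$, integrating over $G$, and inserting the transmission condition $\p_\nu u_\varepsilon^-=\varepsilon\,\p_{\nu_A}u_\varepsilon^+$ together with $u_\varepsilon^-=u_\varepsilon^+$ on $\p G$, the imaginary and real parts give
\[
k^2\tau_0\|u_\varepsilon\|_{L^2(G)}^2=-\varepsilon\,\Im\la \p_{\nu_A}u_\varepsilon^+,u_\varepsilon^+\ra_{\p G},\qquad
\|\nabla u_\varepsilon\|_{L^2(G)}^2=\varepsilon\,\Re\la \p_{\nu_A}u_\varepsilon^+,u_\varepsilon^+\ra_{\p G}+\varepsilon k^2\eta_0\|u_\varepsilon\|_{L^2(G)}^2.
\]
Bounding the duality pairing by $\|\p_{\nu_A}u_\varepsilon^+\|_{H^{-1/2}(\p G)}\|u_\varepsilon^+\|_{H^{1/2}(\p G)}$, using the trace theorem and the conormal-trace estimate (the latter via $\nabla\cdot(A\nabla u_\varepsilon)=-k^2qu_\varepsilon\in L^2$), yields $|\la\p_{\nu_A}u_\varepsilon^+,u_\varepsilon^+\ra_{\p G}|\le C\|u_\varepsilon\|_{H^1(\Omega\setminus\overline G)}^2$ with $C$ independent of $\varepsilon$. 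Since $\tau_0>0$, the two identities combine into $\|u_\varepsilon\|_{H^1(G)}^2\le C\varepsilon\|u_\varepsilon\|_{H^1(\Omega\setminus\overline G)}^2$.

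I expect the main obstacle to be the second step: a uniform-in-$\varepsilon$ a priori bound $\|u_\varepsilon\|_{H^1(B_r)}\le C(\|u^i\|_{H^1(B_r\setminus\overline\Omega)}+\|f\|_{L^2(B_{r_0}\setminus\overline\Omega)})$, which is delicate because \eqref{equ:1.5} is a singular perturbation and its sesquilinear form $a_\varepsilon(\cdot,\cdot)$---written on $B_r\supset\overline\Omega$ with the radiation condition encoded by the Dirichlet-to-Neumann map $\Lambda_k$ on $\p B_r$---obeys only a Gårding inequality, the Helmholtz term $-k^2q_{\varepsilon(x)}u_\varepsilon$ obstructing direct coercivity. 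I would argue by contradiction. If the bound failed, there would be $\varepsilon_j\to0^+$, data $(u^i_j,f_j)$ with $\|u^i_j\|+\|f_j\|\to0$, and solutions normalized by $\|u_j\|_{H^1(B_r)}=1$. The localization estimate gives $\|u_j\|_{H^1(G)}^2\le C\varepsilon_j\to0$, so the traces $u_j|_{\p G}\to0$ in $H^{1/2}(\p G)$; extracting $u_j\rightharpoonup u_*$ weakly in $H^1(B_r)$ and strongly in $L^2(B_r)$ (Rellich), then passing to the limit in the interior equation, the matched Cauchy data on $\p\Omega$, and the DtN relation on $\p B_r$ shows that $u_*$ solves the homogeneous version of \eqref{equ:1.2} (with $u^i=0$, $f=0$) and vanishes on $\p G$; by the uniqueness of the sound-soft problem \eqref{equ:1.2} (guaranteed by $\Im q\ge0$, $\Im A\le0$ together with Rellich's lemma) this forces $u_*=0$. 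Since the principal part has real coefficients bounded below by $\min(\beta,1)$ on $B_r$ for $\varepsilon<1$, the Gårding inequality $\Re a_\varepsilon(u_j,u_j)+C_1\|u_j\|_{L^2(B_r)}^2\ge c_0\|u_j\|_{H^1(B_r)}^2$ holds with $c_0,C_1$ independent of $\varepsilon$; testing the variational identity with $u_j$ gives $\Re a_\varepsilon(u_j,u_j)\to0$, while $\|u_j\|_{L^2(B_r)}^2\to\|u_*\|_{L^2(B_r)}^2=0$, whence $\|u_j\|_{H^1(B_r)}\to0$---contradicting the normalization. This establishes the uniform bound, and the localization estimate then upgrades to $\|u_\varepsilon\|_{H^1(G)}\le C\varepsilon^{1/2}(\|u^i\|_{H^1(B_r\setminus\overline\Omega)}+\|f\|_{L^2(B_{r_0}\setminus\overline\Omega)})$, so by the trace theorem $\|u_\varepsilon\|_{H^{1/2}(\p G)}\le C\varepsilon^{1/2}(\cdots)$.

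Finally I would compare the two fields. Setting $w:=u_\varepsilon-u$ on $\RR^N\setminus\overline G$, the incident field and the source cancel, so $w$ solves the homogeneous outgoing transmission problem $\nabla\cdot(A\nabla w)+k^2qw=0$ in $\Omega\setminus\overline G$, $\Delta w^s+k^2w^s=0$ in $\RR^N\setminus\overline\Omega$, with matched Cauchy data across $\p\Omega$, the Sommerfeld condition, and Dirichlet data $w|_{\p G}=u_\varepsilon|_{\p G}$ on $\p G$ (because $u|_{\p G}=0$). By the well-posedness and stability of this sound-soft exterior problem, $\|w\|_{H^1(B_r\setminus\overline\Omega)}\le C\|u_\varepsilon\|_{H^{1/2}(\p G)}$. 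The far field of $w^s=u_\varepsilon^s-u^s$ is then recovered from $w^s$ on a sphere $\p B_\rho\subset B_r\setminus\overline\Omega$ through interior elliptic estimates and the Green representation formula, a map bounded from $H^1(B_r\setminus\overline\Omega)$ into $C(\mathbb{S}^{N-1})$; combining the last estimates gives
\[
\|u_\varepsilon^\infty-u^\infty\|_{C(\mathbb{S}^{N-1})}\le C\|w\|_{H^1(B_r\setminus\overline\Omega)}\le C_1\varepsilon^{1/2}\big(\|u^i\|_{H^1(B_r\setminus\overline\Omega)}+\|f\|_{L^2(B_{r_0}\setminus\overline\Omega)}\big),
\]
with $C_1$ independent of $\varepsilon$, which is precisely the asserted estimate.
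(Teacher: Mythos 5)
Your proposal is correct and follows essentially the same route as the paper: an energy (localization) estimate showing $\|u_\varepsilon\|_{H^1(G)}\lesssim\varepsilon^{1/2}\|u_\varepsilon\|_{H^1}$, a contradiction argument for the uniform-in-$\varepsilon$ bound $\|u_\varepsilon\|_{H^1(B_r\setminus\overline G)}\lesssim\|u^i\|+\|f\|$, the observation that $u_\varepsilon-u$ solves the homogeneous transmission problem with Dirichlet data $u_\varepsilon^+|_{\partial G}$ so that the stability of that problem (the paper's Lemma \ref{lem:2.3}) converts the $H^{1/2}(\partial G)$ smallness into an $H^1(B_r\setminus\overline G)$ bound, and finally the Green representation for the far field. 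The only tactical differences are that the paper closes the contradiction step by applying the quantitative stability of the auxiliary problem \eqref{equ:2.1} to the normalized sequence rather than your compactness--uniqueness--G\aa rding argument (where, as written, the G\aa rding inequality needs the compact part $\Lambda-\Lambda_0$ of the DtN map split off and sent to zero along the sequence rather than absorbed into the $L^2$ term), and that the paper derives the localization estimate by summing energy identities over $G$, $\Omega\setminus\overline G$, and $B_r\setminus\overline\Omega$ instead of bounding the conormal pairing on $\partial G$ directly.
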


	\begin{remark}\label{Re:2}

	This theorem offers a quantitative characterization of the degree of approximation between the far-field patterns associated with the scattering problems \eqref{equ:1.2} and \eqref{equ:1.5}.  It rigorously demonstrates the existence of an $\varepsilon^{1/2}$-realization of obstacles $G$ as defined in Definition \ref{def:1.1}. For the inverse problem \eqref{equ:1.4}, a standard approach involves the following optimization formulation:
\begin{equation*}\min_{(\hat{G}\oplus(\Omega\backslash\overline{\hat{G}});\gamma,q)\in\mathcal{S}}\left\|u^\infty\left(\hat{x};u^i, f, (\hat{G}\oplus(\Omega\backslash\overline{\hat{G}});\gamma,q)\right)-\mathcal{M}\left(\hat{x};u^i, f, (G\oplus(\Omega\backslash\overline{G});\gamma,q)\right)\right\|_{C(\mathbb{S}^{N-1})},
\end{equation*}
where $\mathcal{S}$ denotes the a priori class of admissible scatterers and $\mathcal{M}$ represents the measured far-field data. Based on Definition~\ref{def:1.1} and Theorem~\ref{thm:1.1}, the problem can be reformulated into a minimization framework as follows:
\begin{equation}\min_{(\Omega;\gamma_{\varepsilon(x)},q_{\varepsilon(x)})\in\mathcal{Q}}\left\|u^\infty\left(\hat{x};u^i, f, (\Omega;\gamma_{\varepsilon(x)},q_{\varepsilon(x)})\right)-u^\infty\left(\hat{x};u^i, f, (G\oplus(\Omega\backslash\overline{G});\gamma,q)\right)\right\|_{C(\mathbb{S}^{N-1})},\label{equ:1.12}
\end{equation}	where $\mathcal{Q}$ represents the class of admissible scatterers. Notably,
 $(\Omega;\gamma_{\varepsilon(x)},q_{\varepsilon(x)})$ serves as an asymptotic global minimizer for \eqref{equ:1.12}. Furthermore, the significant differences in the asymptotic behaviors of $\gamma_{\varepsilon(x)}$ and $q_{\varepsilon(x)}$ within $G$ and $\Omega\backslash\overline{G}$ allow us to locate and determine these obstacles. For a more detailed discussion, please refer to Subsection \ref{sub:tech}.

We emphasize that the parameter $\varepsilon$ in Theorem \ref{thm:1.1} provides a quantitative characterization for effectively approximating the sound-soft obstacles $G$ as acoustic mediums. This parameter can be chosen prior to implementing optimization-based inverse schemes. Typically, $\varepsilon$ is a small positive constant, but not arbitrarily small, meaning we do not assume $\varepsilon \to 0^+$. The physical parameters $\gamma_\varepsilon(x)$ and $q_\varepsilon(x)$, as defined in \eqref{eq:gamma}, characterize the approximate acoustic media for $G$ and are related to $1/\varepsilon$. Consequently, $\gamma_\varepsilon(x)$ and $q_\varepsilon(x)$ remain bounded but may take large values. In contrast, the physical parameters $\gamma(x)$ and $q(x)$, defined in \eqref{eq:qq1} and characterizing the media $\Omega \setminus \overline{G}$, are typically smaller than $\gamma_\varepsilon(x)$ and $q_\varepsilon(x)$. This distinction enables the identification of the boundary of $G$ through optimization-based inverse schemes. Based on the effective medium theory, we will analyze the behavior of the parameters $\gamma(x)$ and $q(x)$ in the region $\Omega \setminus \overline{G}$. By employing various optimization methods, we aim to develop a numerical reconstruction technique that simultaneously determines both the shape and position of the sound-soft obstacles $G$, which is the focus of our future work.

\end{remark}

\begin{thm}\label{thm:1.2}
Let $u_\varepsilon\in H_{loc}^1(\mathbb{R}^N)$ be the solution to (\ref{equ:1.5}). Then there exists a small $\varepsilon_{0}>0$ such that the following estimate holds for $\varepsilon<\varepsilon_{0}:$
\begin{equation*}
\left\|u_{\varepsilon}^{+}\right\|_{H^{1/2}(\partial G)}\leq C_2\varepsilon^{1/2}\left(\|u^i\|_{H^1(B_r\setminus\overline{\Omega})}+\|f\|_{L^2(B_{r_0}\setminus\overline{\Omega})}\right),
\end{equation*}
where $B_{r_0}$ and $B_{r}$ are defined in Theorem \ref{thm:1.1}, $C_2$ depends only on $q,k,\eta_0,\tau_0,\varepsilon_0, G $, $ \Omega$, and $B_r$, and is completely independent of $\varepsilon$.
\end{thm}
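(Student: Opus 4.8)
The plan is to prove a decay estimate for $u_\varepsilon$ in the $H^1$-norm over $G$ that is of order $\varepsilon^{1/2}$, and then transfer it to the boundary trace. Throughout, write $\Lambda := \|u^i\|_{H^1(B_r\setminus\overline\Omega)} + \|f\|_{L^2(B_{r_0}\setminus\overline\Omega)}$ for the data. The natural starting point is the variational reformulation of \eqref{equ:1.5} on a large ball $B_r\supset\overline\Omega$, with the exterior reduced through the Dirichlet-to-Neumann operator $T$ on $\partial B_r$: one seeks $u_\varepsilon\in H^1(B_r)$ with $u_\varepsilon=u^i+u_\varepsilon^s$ such that
\[
a_\varepsilon(u_\varepsilon,v):=\int_{B_r}\gamma_{\varepsilon(x)}\nabla u_\varepsilon\cdot\nabla\bar v\,\rmd x-k^2\int_{B_r}q_{\varepsilon(x)}u_\varepsilon\bar v\,\rmd x-\int_{\partial B_r}(Tu_\varepsilon)\bar v\,\rmd s=F(v)
\]
for all $v\in H^1(B_r)$, where $F$ collects the contributions of $u^i$ and $f$ and obeys $|F(v)|\le C\Lambda\,\|v\|_{H^1(B_r)}$. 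Because $\gamma_{\varepsilon(x)}=\varepsilon^{-1}$ on $G$, the conormal derivative naturally associated with $a_\varepsilon$ across $\partial G$ reproduces exactly the flux condition $\varepsilon^{-1}\partial_\nu u_\varepsilon^-=\partial_{\nu_A}u_\varepsilon^+$, so the single form $a_\varepsilon$ encodes the entire transmission system \eqref{equ:1.5}.

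The first ingredient is a uniform-in-$\varepsilon$ a priori bound $\|u_\varepsilon\|_{H^1(B_r)}\le C\Lambda$ with $C$ independent of $\varepsilon$ (the same estimate underlying Theorem~\ref{thm:1.1}). It follows from a G\aa rding inequality for $a_\varepsilon$ combined with a compactness--uniqueness argument: along any sequence $\varepsilon_n\to0$ saturating the bound, weak $H^1$ limits solve a homogeneous problem whose limiting form is precisely the sound-soft system \eqref{equ:1.2}, and uniqueness there forces the limit to vanish, contradicting the normalization. The point to verify is that the coefficients blowing up inside $G$, namely $\gamma_{\varepsilon(x)}=\varepsilon^{-1}$ and $\Im q_{\varepsilon(x)}=\varepsilon^{-1}\tau_0$, enter with the favorable sign under \eqref{equ:10.1}, so they only improve coercivity and add damping and do not degrade the constant.

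Next I would extract the $\varepsilon$-decay from the energy identity $a_\varepsilon(u_\varepsilon,u_\varepsilon)=F(u_\varepsilon)$. Taking imaginary parts and using $\bar\xi\cdot\Im(A)\xi\le0$, $\Im q=\tau\ge0$ on $\Omega\setminus\overline G$, $\Im\int_{\partial B_r}(Tu_\varepsilon)\bar u_\varepsilon\,\rmd s\ge0$, and $\tau_0>0$, every term except the large damping in $G$ has a definite sign and may be discarded, leaving
\[
k^2\varepsilon^{-1}\tau_0\int_G|u_\varepsilon|^2\,\rmd x\le|\Im F(u_\varepsilon)|\le C\Lambda\,\|u_\varepsilon\|_{H^1(B_r)}\le C\Lambda^2,
\]
so that $\|u_\varepsilon\|_{L^2(G)}^2\le C\varepsilon\Lambda^2$. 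Taking real parts and using $\bar\xi\cdot\Re(A)\xi\ge\beta|\xi|^2$, the coercive gradient contributions outside $G$ are nonnegative and can be dropped from the left-hand side, while the zeroth-order terms $k^2\int_{B_r}\Re(q_{\varepsilon(x)})|u_\varepsilon|^2\,\rmd x$ and the boundary term are controlled by the uniform bound; this isolates $\varepsilon^{-1}\int_G|\nabla u_\varepsilon|^2\,\rmd x\le C\Lambda^2$, hence $\|\nabla u_\varepsilon\|_{L^2(G)}^2\le C\varepsilon\Lambda^2$. Combining the two gives $\|u_\varepsilon\|_{H^1(G)}\le C\varepsilon^{1/2}\Lambda$.

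Finally, the transmission condition $u_\varepsilon^-=u_\varepsilon^+$ on $\partial G$ means the exterior trace $u_\varepsilon^+$ equals the interior trace of $u_\varepsilon|_G\in H^1(G)$; applying the bounded trace operator $H^1(G)\to H^{1/2}(\partial G)$ yields
\[
\|u_\varepsilon^+\|_{H^{1/2}(\partial G)}\le C\|u_\varepsilon\|_{H^1(G)}\le C_2\varepsilon^{1/2}\big(\|u^i\|_{H^1(B_r\setminus\overline\Omega)}+\|f\|_{L^2(B_{r_0}\setminus\overline\Omega)}\big),
\]
which is the assertion. I expect the main obstacle to be the uniform a priori bound of the second step: one must ensure the energy constant does not blow up as the coefficients in $G$ tend to infinity, which is exactly where the favorable signs in \eqref{equ:10.1} and the correct identification of the limiting sound-soft problem are essential; once this bound is secured, the two energy estimates and the trace step are routine.
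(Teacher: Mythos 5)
Your proposal is correct in substance and reproduces the paper's overall strategy for Theorem \ref{thm:1.2}: take real and imaginary parts of the energy identity, use the sign conditions in \eqref{equ:10.1} together with $\Im q\ge 0$, $\tau_0>0$ and the sign of the DtN boundary term to isolate $k^2\varepsilon^{-1}\tau_0\int_G|u_\varepsilon|^2$ and $\varepsilon^{-1}\int_G|\nabla u_\varepsilon|^2$, conclude $\|u_\varepsilon\|_{H^1(G)}\le C\varepsilon^{1/2}\Lambda$, and finish with the trace theorem on $\partial G$ (this is exactly Lemma \ref{lem:2.4} plus the two-line proof of Theorem \ref{thm:1.2} in the paper). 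The one place where you genuinely diverge is the uniform-in-$\varepsilon$ bound. The paper does not pass to a weak limit: it first derives the $G$-estimates with $\|u_\varepsilon\|_{H^1(B_r\setminus\overline G)}$ still on the right-hand side, then runs a normalization/contradiction argument in which the smallness of the trace $\widetilde g|_{\partial G}$ is fed into the quantitative stability estimate of the auxiliary transmission problem \eqref{equ:2.1} (Lemma \ref{lem:2.3}, with Dirichlet datum $p=\widetilde g|_{\partial G}$), which directly forces $\|\widetilde g\|_{H^1(B_r\setminus\overline G)}\to 0$. Your compactness--uniqueness variant can be made to work, but it carries two details you gloss over: to identify the limit as a solution of the sound-soft problem you must test only with functions vanishing on $\overline G$, since for general $v$ the term $\varepsilon^{-1}\int_G\nabla u_\varepsilon\cdot\nabla\bar v$ is only $O(\varepsilon^{-1/2})$ and does not pass to the limit; and to contradict the normalization you must upgrade weak to strong $H^1$ convergence (Rellich plus the G\aa rding inequality). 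The paper's route via Lemma \ref{lem:2.3} sidesteps both issues and is the cleaner choice here; your route has the merit of not requiring the full strength of the stability estimate \eqref{equ:2.10}, only uniqueness for the limiting problem. Note also a mild circularity in your stated ordering: the identification of the sound-soft limit already uses the $G$-decay of the normalized sequence, so the energy identities must be derived before, not after, the uniform bound, exactly as the paper arranges them.
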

\begin{remark} 
Theorem  \ref{thm:1.2} indicates that when the obstacle $G$ is regarded as a medium, the total wave field on $\partial G$ nearly vanishes, namely, the effective medium theory is feasible when we consider $G$ as a sound-soft obstacle. The estimates outlined in Theorem \ref{thm:1.1} and Theorem \ref{thm:1.2} are optimal. In Section \ref{sec:5}, we present a concrete example that rigorously demonstrates the sharpness of these effective approximations, thereby providing strong support for our theoretical conclusions.
\end{remark}

\subsection{Existing results, technical developments,  and discussions}\label{sub:tech}
As previously discussed, this study is primarily motivated by the investigation of the inverse scattering problem (\ref{equ:1.4}). The inverse scattering problem is inherently nonlinear and ill-posed. Extensive research has been conducted on scenarios involving background mediums devoid of obstacles and inhomogeneous mediums with penetrable obstacles; see \cite{Cakoni2003,Cakoni2004,DFLW2022,DGT2025,Hahenr93,Haher1998, ref6,ref99,SU1987} for further details. While the presence of an impenetrable obstacle introduces additional complexities, significant progress has been made in this area. Notable studies include Isakov’s singular source method for the unique recovery of a purely sound-hard obstacle at a fixed wavenumber \cite{Kk1993} and research on the unique recovery of an embedded obstacle, either purely sound-soft or sound-hard,  using finitely many wavenumbers and incident directions \cite{LZ2006}. Recently, it was demonstrated in \cite{LL17} that simultaneous reconstruction of an embedded obstacle and its surrounding inhomogeneous medium can be successfully achieved using multi-frequency far-field data.
Numerical investigations have also contributed to advancements in inverse obstacle scattering problems. For example, a sampling method was developed in \cite{LL17} to reconstruct the shape and location of buried targets, as well as the support of surrounding inhomogeneities. Additionally, a Newton method was proposed in \cite{Z} for reconstructing the interface and buried obstacle without prior knowledge of the boundary type, assuming an isotropic medium with a specified constant internal refractive index.

 In this paper, we address a time-harmonic inverse scattering problem involving multiple sound-soft obstacles embedded within an inhomogeneous anisotropic medium, surrounded by a homogeneous, non-damped medium. Currently, existing studies, such as those in \cite{ref14,ref13}, primarily focus on inverse scattering problems within simply connected domains. In contrast, our work considers a more challenging mathematical model involving multiple embedded impenetrable obstacles, which introduces multiple boundaries and results in a multi-connected region with complex topologies. Furthermore, most existing literature, such as \cite{LL17, YZ}, focuses on cases involving a single simply connected impenetrable obstacle situated in an isotropic medium. Research on inverse scattering from impenetrable obstacles in anisotropic mediums remains quite limited, with only two notable references identified \cite{KW2021, LZZ2015}. Our study, however, addresses the more complex scenario of multiple impenetrable obstacles within an anisotropic medium. The combination of these topological structures and the anisotropic nature of the medium significantly increases the difficulty of reconstructing such complex scatterers.

 To address these challenges, we adopt the effective medium theory outlined above to investigate the inverse scattering problem \eqref{equ:1.4}. Specifically, we demonstrate that these sound-soft obstacles can be effectively approximated by an isotropic and lossy medium characterized by the carefully chosen physical parameters $\gamma_{\varepsilon(x)}$ and $q_{\varepsilon(x)}$. In this context, the total wave field corresponding to the medium scattering problem \eqref{equ:1.5} exhibits decay properties that relate to these specified material parameters at the boundaries of the obstacles. Furthermore, we can provide rigorous estimates to validate this approximation and present a concrete example illustrating the optimality of the estimates outlined in Theorem \ref{thm:1.1} and Theorem \ref{thm:1.2}. As discussed in Remark \ref{Re:2}, the optimization problem \eqref{equ:1.12} indicates that $(\Omega;\gamma_{\varepsilon(x)},q_{\varepsilon(x)})$ serves as an asymptotic global minimizer; thus, we can approximately locate the topological structure of the underlying scatterer 
$(\hat{G}\oplus(\Omega\backslash\overline{\hat{G}});\gamma,q)$. Indeed, it can be observed that $(\Omega;\gamma_{\varepsilon(x)},q_{\varepsilon(x)})$ regard as an effective $\varepsilon^{1/2}$-realization of $G\oplus(\Omega\setminus\overline{G})$, possesses distinct asymptotic behavior for the physical parameters $\gamma_{\varepsilon(x)}$ and $q_{\varepsilon(x)}$ on either side of $\partial G$. Specifically, the parameters $\gamma_{\varepsilon(x)}$ and $q_{\varepsilon(x)}$ become significantly large within the subregion $G$, while remaining bounded in 
$\Omega \setminus \overline{G}$. This remarkable difference in parameter behavior enables us to determine the locations and shapes of these obstacles. Our results in this paper, combined with the extensions presented in \cite{Liu2012} concerning sound-hard obstacles buried in an isotropic and inhomogeneous medium, can identify the types of obstacles and determine the locations of the complex topological structure without prior knowledge of the obstacles. In comparison to the study concerning the sound-hard obstacle in \cite{Liu2012}, we introduce a new Hilbert space to address the challenges presented by non-homogeneous Dirichlet boundary conditions.


\par The remaining sections of this paper are organized as follows: In Section \ref{sec:3}, we introduce several key lemmas that are essential for establishing Theorems \ref{thm:1.1} and \ref{thm:1.2}. Section \ref{sec:4} presents detailed proofs of these theorems. In Section \ref{sec:5}, we examine a concrete example where $G$ degenerates into a single sphere, thereby demonstrating the optimality of the estimates provided in Theorems \ref{thm:1.1} and \ref{thm:1.2}. Finally, the appendix offers a critical discussion concerning the well-posedness of the problem \eqref{equ:1.5}.

\section{Auxiliary results }\label{sec:3}
This section focuses on presenting auxiliary results that will facilitate the proofs of Theorems \ref{thm:1.1} and \ref{thm:1.2}. 
\subsection{The well-posedness and energy estimation on the scattering problem (\ref{equ:2.1}) }
This subsection introduces several lemmas that will be utilized in the subsequent proofs of the main theorems. First, we need to establish the well-posedness of a more general scattering problem \eqref{equ:2.1}. Compare with \eqref{equ:1.2}, there are more flexible choice of boundary values in \eqref{equ:2.1}.  Let $(v,u^s)\in H^1(\Omega\setminus\overline{G})\times H_{loc}^1(\mathbb{R}^N\backslash\overline{\Omega})$ satisfy the following PDE system 
\begin{equation}
\begin{cases}\nabla\cdot(A\nabla v)+k^{2}q(x)v=0&\text{in}   ~~\Omega\setminus\overline{G},\vspace*{1mm}\\\Delta u^{s}+k^{2}u^{s}=f&\text{in}~~\mathbb{R}^{N}\setminus\overline{\Omega},\vspace*{1mm}\\ v=p\in H^{1/2}(\partial G)&\text{on} ~~\partial G,\vspace*{1mm}\\ v-u^{s}=h_{1}\in H^{1/2}(\partial\Omega)&\text{on}~~\partial\Omega,\vspace*{1mm}\\\frac{\partial v}{\partial\nu_{A}}-\frac{\partial u^{s}}{\partial\nu}=h_{2}\in H^{-1/2}(\partial\Omega)&\text{on}~~\partial\Omega,\vspace*{1mm}\\\lim\limits_{|x|\to+\infty}|x|^{(N-1)/2}\left\{\frac{\partial u^{s}}{\partial|x|}-iku^{s}\right\}=0,\label{equ:2.1}\end{cases}
\end{equation}
where $p\in H^{1/2}(\partial G)$, $h_{1}\in H^{1/2}(\partial\Omega)$, $h_{2}\in H^{-1/2}(\partial\Omega)$, and $f$ satisfies $\mathrm{supp}(f)\subset B_{r_{0}}\backslash\overline{\Omega}\subset B_{r}\backslash\overline{\Omega}$.
\par Furthermore, we employ the  Dirichlet-to-Neumann (DtN) map to transform (\ref{equ:2.1}) into a bounded domain problem (\ref{equ:2.2}), which can be defined as follows: let $(v_{1},u_{1})\in H^1({\Omega\setminus\overline{G}})\times H^1(B_r\backslash\overline{\Omega})$ be such that 
\begin{equation} 	\begin{cases}\nabla\cdot(A\nabla v_{1})+k^{2}q(x)v_{1}=0&\text{in}   ~~\Omega\setminus\overline{G},\vspace*{1mm}\\\Delta u_{1}+k^{2}u_{1}=f&\text{in}~~\hspace*{-1mm}\ B_{r}\backslash\overline{\Omega},\vspace*{1mm}\\ v_1=p&\text{on} ~~\partial G,\\v_{1}-u_{1}=h_{1}&\text{on} ~~\partial\Omega,\vspace*{1mm}\\\frac{\partial v_{1}}{\partial\nu_{A}}-\frac{\partial u_{1}}{\partial\nu}=h_{2}&\text{on} ~~\partial\Omega,\vspace*{1.3mm}\\\frac{\partial u_{1}}{\partial\nu}=\Lambda u_{1}&\text{on} ~~\partial B_{r}. \label{equ:2.2}\end{cases}	\end{equation}
Here, $p$, $h_1$, $h_2$, and $f$ are as defined earlier. The operator $\Lambda:H^{1/2}(\partial B_r)\to H^{-1/2}(\partial B_r)$ is the DtN map. Thus, ${\Lambda}: \widetilde{M}\longmapsto\frac{\partial\widetilde{T}}{\partial\nu}$, where $\widetilde{T}$ is a radiating solution in $H_{loc}^{1}(\mathbb{R}^{N}\backslash\overline{B}_{r})$ satisfying the Helmholtz equation
\begin{center}
$\begin{cases}\Delta\widetilde{T}+k^2 \widetilde{T}=0~~&\text{in}~~\mathbb{R}^N\backslash\overline{B_r},\\[1.6ex]
\widetilde{T}=\widetilde{M}\in H^{1/2}(\partial B_r)~~&\text{on}~~\partial B_r.\end{cases}$
\end{center}
With all these preparations, we now proceed to present the following lemmas.

\begin{lem}\label{lem:2}
The scattering problems (\ref{equ:2.1}) and (\ref{equ:2.2}) are equivalent.
\end{lem}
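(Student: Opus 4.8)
The plan is to establish the equivalence in two directions, using the DtN map $\Lambda$ as the bridge that encodes both the Helmholtz equation in the unbounded region $\mathbb{R}^N\setminus\overline{B_r}$ and the Sommerfeld radiation condition. The single analytic fact I would invoke throughout is the unique solvability of the exterior Dirichlet problem for the Helmholtz equation subject to the radiation condition; this is precisely what makes $\Lambda:H^{1/2}(\partial B_r)\to H^{-1/2}(\partial B_r)$ well defined, via $\Lambda\widetilde M=\partial\widetilde T/\partial\nu$ with $\widetilde T$ the associated radiating solution.

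For the forward direction, suppose $(v,u^s)$ solves \eqref{equ:2.1}. I would set $v_1:=v$ and $u_1:=u^s|_{B_r\setminus\overline\Omega}$. Because $\mathrm{supp}(f)\subset B_{r_0}\setminus\overline\Omega\subset B_r\setminus\overline\Omega$, the two interior PDEs and all boundary conditions on $\partial G$ and $\partial\Omega$ carry over verbatim by restriction, so the only nontrivial point is the condition $\partial u_1/\partial\nu=\Lambda u_1$ on $\partial B_r$. Here the observation is that $u^s$ solves the homogeneous equation $\Delta u^s+k^2u^s=0$ in $\mathbb{R}^N\setminus\overline{B_r}$ and is radiating; hence by uniqueness it coincides with the radiating solution $\widetilde T$ determined by its own trace $\widetilde M:=u^s|_{\partial B_r}$, and the definition of $\Lambda$ gives $\partial u^s/\partial\nu=\Lambda(u^s|_{\partial B_r})$ on $\partial B_r$, which is exactly the required DtN relation.

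For the converse, suppose $(v_1,u_1)$ solves \eqref{equ:2.2}. I would keep $v:=v_1$ and reconstruct the exterior field by gluing: set $u^s:=u_1$ on $B_r\setminus\overline\Omega$ and $u^s:=\widetilde T$ on $\mathbb{R}^N\setminus\overline{B_r}$, where $\widetilde T$ is the radiating solution with Dirichlet data $\widetilde M=u_1|_{\partial B_r}$. By construction the two pieces share the Dirichlet trace $u_1|_{\partial B_r}$ on $\partial B_r$, so $u^s$ is continuous across that interface, while the DtN condition in \eqref{equ:2.2} forces the Neumann traces to agree as well, since $\partial u_1/\partial\nu=\Lambda u_1=\partial\widetilde T/\partial\nu$ there. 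The remaining conditions on $\partial G$ and $\partial\Omega$ are inherited directly, and $u^s=\widetilde T$ satisfies the radiation condition at infinity.

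I expect the main obstacle to be the gluing step across $\partial B_r$ in this second direction, namely verifying that the piecewise-defined $u^s$ is a genuine weak solution in $H^1_{\mathrm{loc}}(\mathbb{R}^N\setminus\overline\Omega)$ of $\Delta u^s+k^2u^s=f$ on the whole exterior, with no spurious source concentrated on the interface. I would settle this by testing against an arbitrary $\phi\in C_c^\infty(\mathbb{R}^N\setminus\overline\Omega)$ and applying Green's identity separately on $B_r\setminus\overline\Omega$ and on a large annulus in $\mathbb{R}^N\setminus\overline{B_r}$: the boundary contributions on $\partial B_r$ coming from the two sides cancel precisely because both the Dirichlet and the Neumann traces match, so the transmission conditions guarantee that the jump terms vanish and $u^s$ solves the equation across $\partial B_r$. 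This closes the argument and yields the claimed equivalence of \eqref{equ:2.1} and \eqref{equ:2.2}.
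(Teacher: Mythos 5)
Your proof is correct, and the forward direction coincides with the paper's (the paper merely asserts it is ``straightforward from the definition of $\Lambda$''; you supply the missing justification, namely that $u^s$ restricted to $\mathbb{R}^N\setminus\overline{B_r}$ is, by uniqueness of the radiating exterior Dirichlet problem, the very solution $\widetilde T$ used to define $\Lambda$, so $\partial u^s/\partial\nu=\Lambda(u^s|_{\partial B_r})$ holds automatically). For the converse direction, however, you take a genuinely different route. The paper extends $u_1$ beyond $B_r$ by writing Green's integral representation of $u_1$ over $B_r\setminus\overline\Omega$, showing that the $\partial B_r$ boundary integral vanishes because both $\Lambda$ and the fundamental solution $\Phi(x,y)$ are built from radiating fields, and then reading off from the resulting formula (involving only $\partial\Omega$ data and the source $f$) an explicit radiating extension to all of $\mathbb{R}^N\setminus\overline\Omega$. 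You instead glue $u_1$ to the exterior radiating solution $\widetilde T$ with Dirichlet data $u_1|_{\partial B_r}$ and verify, via the standard weak-formulation transmission argument, that matching Dirichlet traces (by construction) and matching Neumann traces (forced by the DtN condition $\partial u_1/\partial\nu=\Lambda u_1=\partial\widetilde T/\partial\nu$) eliminate any interface distribution on $\partial B_r$. Your gluing argument is arguably more elementary and self-contained---it needs only unique solvability of the exterior Dirichlet problem and Green's identity on each subdomain, not the representation formula or the cancellation identity \eqref{equ:2.8}---whereas the paper's approach buys an explicit closed-form expression for the extended field, which makes the radiation condition of the extension immediate from the properties of $\Phi$. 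Both are complete proofs; your correct identification of the interface-gluing step as the only delicate point, and your resolution of it, are exactly what is needed.
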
	
\begin{proof}
By applying the definition of $\Lambda$, it is straightforward to observe that if $(v, u^{s})$ solves the scattering problem (\ref{equ:2.1}), then $(v,u^{s})|_{B_{r}\backslash\overline{G}}$ solves the scattering problem (\ref{equ:2.2}). On the other hand, suppose $(v_1,u_1)$ solves the truncated system (\ref{equ:2.2}). By applying Green's integral representation, we obtain
\begin{equation}\begin{aligned}
	u_1(x)=&\int_{\partial B_r}\Lambda u_1\cdot\Phi(x,y)-u_1(y)\cdot\frac{\partial\Phi(x,y)}{\partial\nu(y)}\mathrm{d}s-\int_{\partial\Omega}\frac{\partial u_1(y)}{\partial\nu(y)}\cdot\Phi(x,y)-u_1(y)\cdot\frac{\partial\Phi(x,y)}{\partial\nu(y)}\mathrm{d}s\\&-\int_{B_r\setminus\overline{\Omega}}\Phi(x,y)\cdot f(y)\mathrm{d}y,\label{equ:2.7}
\end{aligned}
\end{equation}
where  $\frac{\partial u_1}{\partial\nu} = \Lambda u_1$ on $\partial B_r$, and $\Phi(x,y)$ is the fundamental solution to the Helmholtz equation, given by
\begin{equation*}\Phi(x,y)=\frac{i}{4}\left(\frac{k}{2\pi|x-y|}
\right)^{(N-2)/2}H_{(N-2)/2}^{(1)}(k|x-y|),\end{equation*}
with $N = 2,3$, $x \in B_r\setminus\overline{\Omega}$, and $H_n^{(1)}(t)$ representing the $n$-th order Hankel function of the first kind. Since $\Lambda$ and $\Phi(x,y)$ are defined to satisfy the Sommerfeld radiation condition, it follows that
\begin{equation}\label{equ:2.8}
\int_{\partial B_r}\Big\{\frac{\partial\Phi(x,y)}{\partial\nu(y)}\cdot u_1(y)-\Phi(x,y)\cdot\Lambda u_1(y)\Big\}\mathrm{d}s=0.
\end{equation}
Substituting (\ref{equ:2.8}) into (\ref{equ:2.7}), we obtain
\begin{equation*}
\begin{aligned}u_1(x)=\int_{\partial\Omega}\Big\{\frac{\partial\Phi(x,y)}{\partial\nu(y)}\cdot u_1(y)-\Phi(x,y)\cdot\frac{\partial u_1(y)}{\partial\nu(y)}\Big\} \mathrm{d}s-\int_{B_r\setminus\overline{\Omega}}\Phi(x,y)\cdot f(y) \mathrm{d}y.
\end{aligned}\end{equation*}
Clearly, $u_1$ can be extended to a function in $H_{loc}^1(\mathbb{R}^N\backslash\overline{\Omega})$, and we continue to denote it by $u_1$. Since each variable of $\Phi(x,y)$ satisfies the Sommerfeld radiation condition, the extended function $u_1 \in H_{loc}^1(\mathbb{R}^N\backslash\overline{\Omega})$ also satisfies this condition. Thus, $(v_1, u_1)$ solves the problem (\ref{equ:2.1}).	
\end{proof}
The following lemma shows that the PDE system (\ref{equ:2.1}) is well-posed and admits the energy estimate \eqref{equ:2.10}.
\begin{lem}\label{lem:2.3}
There exists a unique solution $(v,u^s)\in H^1(\Omega\setminus\overline{G})\times H_{loc}^1(\mathbb{R}^N\backslash\overline{\Omega})$ to (\ref{equ:2.1}), and the solution satisfies 
\begin{equation}\begin{aligned}\|v\|_{H^{1}(\Omega\setminus\overline{G})}+\|u^{s}\|_{H^{1}(B_{r}\setminus\overline{\Omega})}\leq & C\Big(\|p\|_{H^{1/2}(\partial G)}+\|h_{1}\|_{H^{1/2}(\partial\Omega)}+\|h_{2}\|_{H^{-1/2}(\partial\Omega)}\\
&\hspace{4.5cm}+\|f\|_{L^{2}(B_{r_{0}}\setminus\overline{\Omega})}\Big),\label{equ:2.10}\end{aligned}\end{equation} where $r>r_0$,  $C$ is positive depending only on $\gamma$, $q$, $k$, $\Omega$, $G$, and $B_r$, but independent of $p$, $h_1$, $h_2$, $f$.
\end{lem}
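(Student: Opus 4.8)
The plan is to first invoke Lemma~\ref{lem:2} to replace the unbounded problem \eqref{equ:2.1} by the equivalent truncated transmission problem \eqref{equ:2.2} on $B_r\setminus\overline{G}$, and then to solve \eqref{equ:2.2} by a variational Fredholm argument. The principal difficulty, relative to a sound-hard treatment, is the inhomogeneous Dirichlet datum $v_1=p$ on $\partial G$ together with the inhomogeneous transmission datum $v_1-u_1=h_1$ on $\partial\Omega$. I would absorb both by a lifting: using the trace theorem, pick $\Phi_0\in H^1(\Omega\setminus\overline{G})$ with $\Phi_0|_{\partial G}=p$, $\Phi_0|_{\partial\Omega}=h_1$ and $\|\Phi_0\|_{H^1(\Omega\setminus\overline{G})}\le C(\|p\|_{H^{1/2}(\partial G)}+\|h_1\|_{H^{1/2}(\partial\Omega)})$, and set $\widetilde v=v_1-\Phi_0$. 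Then $\widetilde v=0$ on $\partial G$ and $\widetilde v=u_1$ on $\partial\Omega$, so the pair $(\widetilde v,u_1)$ lives in the Hilbert space
\[
X:=\{(\phi,\psi)\in H^1(\Omega\setminus\overline{G})\times H^1(B_r\setminus\overline{\Omega}):\ \phi|_{\partial G}=0,\ \phi|_{\partial\Omega}=\psi|_{\partial\Omega}\},
\]
which is precisely the space designed to encode the homogeneous Dirichlet condition on $\partial G$ and the continuity across $\partial\Omega$.

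Multiplying the two interior equations by conjugated test functions $(\phi,\psi)\in X$, integrating by parts on $\Omega\setminus\overline{G}$ and $B_r\setminus\overline{\Omega}$, and using $\phi=0$ on $\partial G$, $\phi=\psi$ on $\partial\Omega$, the relation $\partial_\nu u_1=\Lambda u_1$ on $\partial B_r$, and the jump $\partial_{\nu_A}v_1-\partial_\nu u_1=h_2$, the $\partial\Omega$ boundary terms collapse to $\int_{\partial\Omega}h_2\bar\phi\,ds$. This yields the weak problem: find $(\widetilde v,u_1)\in X$ with $a((\widetilde v,u_1),(\phi,\psi))=\ell(\phi,\psi)$ for all $(\phi,\psi)\in X$, where
\[
a((v,u),(\phi,\psi))=\int_{\Omega\setminus\overline{G}}A\nabla v\cdot\nabla\bar\phi\,dx+\int_{B_r\setminus\overline{\Omega}}\nabla u\cdot\nabla\bar\psi\,dx-k^2\int_{\Omega\setminus\overline{G}}qv\bar\phi\,dx-k^2\int_{B_r\setminus\overline{\Omega}}u\bar\psi\,dx-\int_{\partial B_r}(\Lambda u)\bar\psi\,ds,
\]
and the right-hand side $\ell$ collects $f$, $h_2$, and the lifting contributions of $\Phi_0$, satisfying $|\ell(\phi,\psi)|\le C(\|p\|_{H^{1/2}(\partial G)}+\|h_1\|_{H^{1/2}(\partial\Omega)}+\|h_2\|_{H^{-1/2}(\partial\Omega)}+\|f\|_{L^2(B_{r_0}\setminus\overline{\Omega})})\,\|(\phi,\psi)\|_X$ by the trace, duality, and lifting bounds. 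Next I would establish a Gårding inequality: the principal part $\int A\nabla v\cdot\nabla\bar\phi+\int\nabla u\cdot\nabla\bar\psi$ is coercive on the gradients thanks to $\bar\xi\cdot\Re(A)\xi\ge\beta|\xi|^2$ in \eqref{equ:10.1}; the two $L^2$ terms are compact by the compact embedding $H^1\hookrightarrow L^2$; and the DtN term is treated by the standard modal decomposition of $\Lambda$ on $\partial B_r$, which shows that $-\Re\langle\Lambda u,u\rangle$ is, up to a finite-rank (hence compact) perturbation, nonnegative. Hence the operator $\mathcal A:X\to X$ induced by $a$ is of the form ``coercive plus compact'' and therefore Fredholm of index zero.

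To close by the Fredholm alternative I would prove uniqueness for $p=h_1=h_2=f=0$. Then $\Phi_0=0$, so the solution is $(v,u^s)\in X$; testing with itself and taking imaginary parts gives
\[
\int_{\Omega\setminus\overline{G}}\overline{\nabla v}\cdot\Im(A)\nabla v\,dx-k^2\int_{\Omega\setminus\overline{G}}\Im(q)|v|^2\,dx-\Im\int_{\partial B_r}(\Lambda u^s)\,\bar{u^s}\,ds=0.
\]
By \eqref{equ:10.1}, by $\Im q\ge0$, and by the radiating energy identity (Green's theorem together with the radiation condition yields $\Im\int_{\partial B_r}\bar{u^s}\,\Lambda u^s\,ds=k\int_{\mathbb{S}^{N-1}}|u^{s,\infty}|^2\,ds\ge0$), all three terms are nonpositive and sum to zero, so each vanishes; in particular the far field $u^{s,\infty}\equiv0$. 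Rellich's lemma and unique continuation then force $u^s\equiv0$ in $\mathbb{R}^N\setminus\overline{\Omega}$, whence $u^s=\partial_\nu u^s=0$ on $\partial\Omega$, and the homogeneous transmission conditions transfer zero Cauchy data to $v$ on $\partial\Omega$. Since $A\in C^1$ and $\Omega\setminus\overline{G}$ is connected, unique continuation for $\nabla\cdot(A\nabla\,\cdot\,)+k^2q$ yields $v\equiv0$.

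Therefore $\mathcal A$ is injective, hence bijective with a bounded inverse whose norm depends only on the fixed data $\gamma,q,k,\Omega,G,B_r$. Combining $\|(\widetilde v,u_1)\|_X\le\|\mathcal A^{-1}\|\,\|\ell\|_{X^*}$ with the lifting estimate $\|v_1\|_{H^1(\Omega\setminus\overline{G})}\le\|\widetilde v\|_{H^1(\Omega\setminus\overline{G})}+\|\Phi_0\|_{H^1(\Omega\setminus\overline{G})}$ and the bound on $\|\ell\|_{X^*}$ above produces exactly \eqref{equ:2.10}. I expect the main obstacle to be the uniqueness step: orchestrating the three sign conditions (from $\Im(A)\le0$, $\Im q\ge0$, and the nonnegative imaginary part of the DtN quadratic form) so that the interior field is annihilated, and then propagating the vanishing Cauchy data from $\partial\Omega$ through the anisotropic region by unique continuation for the variable-coefficient operator.
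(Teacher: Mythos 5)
Your proposal is correct and follows essentially the same route as the paper: reduction to the DtN-truncated problem via Lemma~\ref{lem:2}, a variational formulation that is ``coercive plus compact'' (hence Fredholm of index zero), and uniqueness via the imaginary-part energy identity combined with Rellich's lemma and unique continuation. The only differences are organizational --- you lift both $p$ and $h_1$ by a single $H^1$ trace extension and work in a product space with a trace-matching constraint, whereas the paper glues the two unknowns into one function on $B_r\setminus\overline{G}$, lifts $h_1$ and $p$ by two auxiliary Helmholtz problems ($\widetilde v$ and $w_0$), and splits the form explicitly into a Lax--Milgram operator $\mathcal S$ plus compact operators $\mathcal R_1,\mathcal R_2$ --- but these choices are interchangeable and yield the same estimate.
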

\begin{proof} 
\medskip  \noindent {\it Step I:} Prove that (\ref{equ:2.1}) has at most one solution.

\medskip \noindent Let $p = 0$, $h_1 = 0$, $h_2 = 0$, and $f = 0$. Subsequently, the first equation in (\ref{equ:2.1}) is multiplied by $\overline{v}$ and integrated over $\Omega\backslash\overline{G}$, yielding
\begin{center}
$\int_{\Omega\setminus\overline{G}}\,\nabla\cdot(A\nabla v)\,\overline{v}+k^{2}q(x)v^{2}\mathrm{d}x=0.$
\end{center}
Applying Green's first formula, we obtain
\begin{equation}\label{equ:2.3}
\begin{aligned}
	\int_{\partial\Omega}\overline{v}\,\nu\cdot A\nabla v\,\mathrm{d}s-\int_{\partial G}\overline{v}\,\nu\cdot A\nabla v\,\mathrm{d}s-\int_{\Omega\setminus\overline{G}}\nabla\overline{v} \cdot A\nabla v\,\mathrm{d}x+\int_{\Omega\setminus\overline{G}}k^{2}q(x)v^{2}\mathrm{d}x=0.
	\end{aligned}\end{equation}
	Combining this with the boundary conditions on $\partial G$, we obtain
	\begin{align}
\int_{\partial\Omega}\overline{v}\,\nu\cdot A\nabla v\,\mathrm{d}s-\int_{\Omega\setminus\overline{G}}\nabla\overline{v} \cdot A\nabla v\,\mathrm{d}x+\int_{\Omega\setminus\overline{G}}k^{2}q(x)v^{2}\,\mathrm{d}x=0.
\label{equ:2.4}
\end{align}
For the second equation, we multiply both sides of the equation by $\overline{u^s}$ and integrate the resulting expression over $B_{r}\backslash\overline{\Omega}$. Then it follows that
\begin{center}
\scalebox{1.04}{$\int_{B_{r}\backslash\overline{\Omega}}\,\Delta u^{s}\,\overline{u^{s}}+k^{2}|u^{s}|^{2}\,\mathrm{d}x=0.$}
\end{center}
Applying Green's first formula again, we derive
\begin{align}
\int_{\partial B_r}\frac{\partial u^s}{\partial\nu}\overline{u^s}\mathrm{d}s-\int_{\partial\Omega}\frac{\partial u^s}{\partial\nu}\overline{u^s}\mathrm{d}s-\int_{B_{r}\backslash\overline{\Omega}} |\nabla u^s|^2\,\mathrm{d}x+\int_{B_{r}\backslash\overline{\Omega}}k^2|u^s|^2\,\mathrm{d}x=0.\label{equ:2.5}
\end{align}
Adding \eqref{equ:2.3}, (\ref{equ:2.4}) and (\ref{equ:2.5}), along with the boundary conditions on $\partial\Omega$, we obtain
\begin{equation}
\begin{aligned}
	&\int_{\Omega\setminus\overline{G}}k^{2}q(x)v^{2}\,\mathrm{d}x-\int_{\Omega\setminus\overline{G}}\nabla\overline{v} \cdot A\nabla v\,\mathrm{d}x-\int_{B_{r}\backslash\overline{\Omega}} |\nabla u^s|^2\,\mathrm{d}x+\int_{B_{r}\backslash\overline{\Omega}}k^2|u^s|^2\,\mathrm{d}x	\\
&\hspace{9cm} +\int_{\partial B_r}\frac{\partial u^s}{\partial\nu}\overline{u^s}\,\mathrm{d}s=0.\label{equ:2.6}
\end{aligned}
\end{equation}
Taking the imaginary part of (\ref{equ:2.6}), we deduce that
\begin{center}
$\Im\int_{\partial B_r}\frac{\partial{u}^s}{\partial\nu}\overline{{u}^s}\,\mathrm{d}s(x)=-k^{2}\Im\int_{\Omega\setminus\overline{G}}q(x)v^{2}\,\mathrm{d}x+\Im\int_{\Omega\setminus\overline{G}}\nabla\overline{v} \cdot A\nabla v\,\mathrm{d}x\le0.$
\end{center}
From Rellich's lemma and the unique continuation principle, it follows that $u^s = 0$ in $\mathbb{R}^{N}\backslash\overline{B}_{r}$. Consequently, $u^s = 0$ in $B_{r}\backslash\overline{\Omega}$ and $v = 0$ in $\Omega\setminus\overline{G}$. Thus, the uniqueness of the solution to (\ref{equ:2.1}) is established.

\medskip  \noindent {\it Step II:} Verify the existence of a solution to \eqref{equ:2.1} and demonstrate that it satisfies the estimate \eqref{equ:2.10}.

\medskip  \noindent
By Lemma \ref{lem:2}, we know that the problems \eqref{equ:2.1} and \eqref{equ:2.2} are equivalent. Therefore, we will focus on establishing the existence of a solution to \eqref{equ:2.2} that meets the estimate \eqref{equ:2.10}. For simplicity, we assume that $k^2$  is not a Dirichlet eigenvalue of $-\Delta$ in $B_r\backslash\overline{\Omega}$. Define $w(x)=v_1(x)\mathrm{~in~}\Omega\backslash\overline{G} $ and $w(x)=u_1(x)+\widetilde{v}(x)$ in $B_{r}\backslash\overline{\Omega}$. Consequently, the PDE system (\ref{equ:2.2}) can be reformulated as follows
\begin{equation}
\begin{cases}
	\nabla\cdot(A\nabla w)+k^2qw=0&\text{in}~\Omega\setminus\overline{G},\vspace*{1mm}\\
	\Delta w+k^2w=f&\text{in}~B_r\backslash\overline{\Omega},\vspace*{1mm}\\
	w=p&\text{on}~{\partial{G}}, \vspace*{1mm}\\
	w^-=w^+$,\quad$\frac{\partial w^-}{\partial\nu_{A}}=\frac{\partial w^+}{\partial\nu}+h_2-\frac{\partial\widetilde{v}}{\partial\nu}&\text{on}~\partial\Omega,\vspace*{1mm}\\
	\frac{\partial w}{\partial\nu}=\Lambda w+\frac{\partial\widetilde{v}}{\partial\nu}&\text{on}~\partial B_r,\label{equ:2.11}
	\end{cases}\end{equation}
	where $\widetilde{v}$ satisfies
	\begin{equation*}
\begin{cases}
	\Delta\widetilde{v}+k^2\widetilde{v}=0&\text{in~}B_r\backslash\overline{\Omega},\vspace*{1mm}\\
	\widetilde{v}=h_1&\text{on }\partial\Omega,\vspace*{1mm}\\
	\widetilde{v}=0&\text{on}~\partial B_r.
\end{cases}
\end{equation*}
It is evident that $\widetilde{v}$ is unique, and it follows that $\|\tilde{v}\|_{H^{1}(B_{r}\setminus\overline{\Omega})}\leq C'\|h_{1}\|_{H^{1/2}(\partial\Omega)}$, where $C'$ is a positive constant independent on $\varepsilon$. Subsequently, we introduce a bounded operator $\Lambda_0:H^{1/2}(\partial B_r)\longrightarrow H^{-1/2}(\partial B_r)$, which maps $\phi\text{ to }\frac{\partial\widetilde{w}}{\partial\nu}\Big|_{\partial B_r}.$ Here, $\widetilde{w}\in H_{loc}^{1}(\mathbb{R}^{N}\backslash\overline{{B_{r}}})$ is the unique solution to the following system
\begin{equation*}
\begin{cases}
	\Delta\widetilde w=0&\text{in}~~\mathbb R^N\backslash\overline{B_r},
	\vspace*{3mm}\\
	\widetilde w=\phi\in H^{1/2}(\partial B_r)&\text{on}~~\partial B_r,
\end{cases}
\end{equation*}
and satisfies the decay property at infinity, specifically $\widetilde w=\mathcal{O}(\operatorname{log}|x|)$ for $ N=2$ and $\widetilde w=\mathcal{O}(|x|^{-1})$ for $N=3$, as $|x|\to+\infty$. The operator $\Lambda_0$ possesses the following properties \cite{ref4,ref6}:
\begin{center}
$\left\{\begin{array}{l} -\int_{\partial B_r}\overline{\phi} ~\Lambda_0~\phi ds\geq0 \quad\quad\phi\in H^{1/2}(\partial B_r),\vspace*{5mm}\\\Lambda-\Lambda_0:H^{1/2}(\partial B_r)\to H^{-1/2}(\partial B_r) ~\text{is compact}.
\end{array}\right.$
\end{center}
Next, we introduce a new space $\mathcal{V}$ as follows:
\begin{center}
$\mathcal{V} :=\left \{ w\in H^1(B_r\setminus\overline{G}); w=0 ~\text{on}~\partial G \right \}.$\end{center}Let $w_0\in H^1(B_r\setminus\overline{G})$	
be the solution to the PDE system\begin{equation}
\begin{cases}
	-\Delta{w_0}-k^2{w_0}=0&\text{in~}B_r\backslash\overline{G},\vspace*{1mm}\\
	{w_0}=p&\text{on }\partial G, 	\vspace*{1mm}\\
	{w_0}=0&\text{on}~\partial B_r.\label{equ:8.8}
\end{cases}
\end{equation}
Based on the well-posedness of (\ref{equ:8.8}), we can conclude that $\|w_0\|_{H^1(B_r\backslash\overline{G})}\leq \widetilde{C}\|p\|_{H^{1/2}(\partial G)}$, where $\widetilde{C}$ is a positive constant independent on $\varepsilon$. Then for any $\varphi\in \mathcal{V}$, we can easily obtain the variational formulation of \eqref{equ:2.11}: find $w\in{H^1(B_r\backslash\overline{G})}$ such that
\begin{align*} a_1(w-w_0,\varphi)+a_2(w-w_0,\varphi)=\mathcal{F}(\varphi),
\end{align*}
where the bilinear forms $a_1(\cdot,\cdot)$ and $a_2(\cdot,\cdot)$ and the linear functional $\mathcal{F}(\varphi)$ are defined by
 \begin{align*}
a_1(w-w_0,\varphi)&:=\int_{\Omega\setminus\overline{G}}\nabla\bar{\varphi}\cdot A\nabla(w-w_0)\,\mathrm{d}x+\int_{B_r\setminus\overline{G}} k^2(w-w_0)\bar{\varphi}\,\mathrm{d}x\\
&\quad\,\,-\int_{\partial B_r}\Lambda_0(w-w_0)\bar{\varphi}\,\mathrm{d}s\mathrm{d}x+\int_{B_r\setminus\overline{\Omega}}\nabla (w-w_0)\cdot\nabla\bar{\varphi},\nonumber\\
a_2(w-w_0,\varphi)&:=-\int_{\Omega\setminus\overline{G}}k^2(q+1)(w-w_0)\bar{\varphi}\,\mathrm{d}x-2\int_{B_{r}\setminus\overline{\Omega}} k^{2}(w-w_0)\bar{\varphi}\,\mathrm{d}x\\
&\quad\,\, -\int_{\partial B_{r}} (\Lambda-\Lambda_{0})(w-w_0)\bar{\varphi}\,\mathrm{d}s\nonumber,
\\
\mathcal{F}(\varphi)&:=\int_{\partial\Omega}(h_{2}-\frac{\partial\tilde{v}}{\partial\nu})\bar{\varphi}\,\mathrm{d}s+\int_{\partial B_{r}}\frac{\partial\tilde{v}}{\partial\nu}\bar{\varphi}\,\mathrm{d}s+\int_{\Omega\setminus\overline{G}}\nabla w_0\cdot\nabla\bar{\varphi}\,\mathrm{d}x\\
&\quad\,\,-\int_{B_r\setminus\overline{\Omega}}f\bar{\varphi}\,\mathrm{d}x-\int_{\Omega\setminus\overline{G}}\nabla\bar{\varphi}\cdot A\nabla w_0\,\mathrm{d}x
+\int_{\Omega\setminus\overline{G}}k^2(q-1)w_0\,\bar{\varphi}\,\mathrm{d}x.
\end{align*}
Since $\Lambda$ is a bounded operator from $H^{1/2}(\partial B_r)$ to $H^{-1/2}(\partial B_r)$, $\mathcal{F}$ is a bounded conjugate linear functional on $\mathcal{V}$ and both $a_1(\cdot,\cdot)$ and $a_2(\cdot,\cdot)$ are continuous on $\mathcal{V}\times \mathcal{V}$.
Clearly, from (\ref{equ:10.1}), it is evident that for any $\phi, \varphi \in \mathcal{V}$, the function $a_1(\cdot,\cdot)$ satisfies both the conditions of coercivity and positive definiteness, as follows:
\begin{equation*}
\begin{aligned}|a_1(\phi,\varphi)|\leq C_1\|\phi\|_{H^1(B_r\backslash\overline{G})}\|\varphi\|_{H^1(B_r\backslash\overline{G})}~~\mathrm{and}\quad a_1(\varphi,\varphi)\geq C_2\|\varphi\|_{H^1(B_r\backslash\overline{G})}^2,
\end{aligned}
\end{equation*}
where $C_1$ and $C_2$ are positive constants independent on $\varepsilon$. According to the Lax-Milgram lemma, there exists a bounded operator $\mathcal{S}$: $\mathcal{V}\to \mathcal{V}$ such that	
\begin{equation*} 
\begin{aligned} a_1(w-w_0,\varphi)=\left \langle \mathcal{S}(w-w_0),\varphi  \right \rangle,
\end{aligned}\end{equation*}
where the notation $\left \langle\cdot,\cdot  \right \rangle $  denotes the inner product in $H^1(B_r\backslash\overline{G})$. Thus the inverse of $\mathcal{S}$ is also bounded.  From the expression of the bilinear form $a_2(\cdot,\cdot)$, we introduce two bounded operators $\mathcal{R}_1$ and $\mathcal{R}_2: \mathcal{V} \to \mathcal{V}$ given by
\begin{align}
a_{3}(w-w_0,\varphi)&:=\int_{\Omega\setminus\overline{G}}k^2(q+1)(w-w_0)\bar{\varphi}\,\mathrm{d}x+2\int_{B_{r}\setminus\overline{\Omega}} k^{2}(w-w_0)\bar{\varphi}\,\mathrm{d}x
=\left \langle\mathcal{R}_{1}(w-w_0),\varphi\right \rangle,\label{equ:2.20}\\
a_4(w-w_0,\varphi)&:=\int_{\partial B_r}(\Lambda-\Lambda_0)(w-w_0)\bar{\varphi}\,\mathrm{d}s=\left \langle\mathcal{R}_2(w-w_0),\varphi\right \rangle.
\label{equ:2.21}
\end{align}

Next, we aim to prove that the operators $\mathcal{R}_{1}$ and $\mathcal{R}_{2}$ are compact. Let $\{Q_n\}_{n\in\mathbb{N}}$ be a bounded sequence in $H^1(B_r\backslash\overline{G}),$ and we can assume that $\|Q_n\|_{H^1(B_r\backslash\overline{G})}\leq{M}$ with $Q_n \rightharpoonup Q_0$ in $H^1(B_r\backslash\overline{G})$, where $M$ is a positive constant. Since $H^1(B_r\backslash\overline{G})\hookrightarrow L^2(B_r\backslash\overline{G})$, it follows that $Q_n\to Q_0$ in $L^2(B_r\backslash\overline{G})$. By (\ref{equ:2.20}), we can derive that
\begin{equation*}\begin{aligned} a_3(Q_n-Q_0,\varphi)=\left \langle\mathcal{R}_1(Q_n-Q_0),\varphi  \right \rangle.
\end{aligned}	\end{equation*}
Let $\varphi=\mathcal{R}_1(Q_n-Q_0)$, then, from (\ref{equ:2.20}), we can easily obtain
\begin{center}			$\|\mathcal{R}_1(Q_n-Q_0)\|^2_{H^1(B_r\backslash\overline{G})}\leq2Mk^2\max\{\|q+1\|_{L^\infty(\Omega\setminus\overline{G})},2\}\|\mathcal{R}_1\|\|Q_n-Q_0\|_{L^2(B_r\backslash\overline{G})}\to0.$
\end{center} 
Therefore, we establish the compactness of $\mathcal{R}_1$. Similarly, we can prove the compactness of $\mathcal{R}_2$. Indeed, considering $Q_n \rightharpoonup Q_0$ in $H^1(B_r\backslash\overline{G})$ and $\|Q_n\|_{H^1(B_r\backslash\overline{G})}\leq M$, by virtue of the trace theorem, we deduce that $Q_n|_{\partial B_r} \rightharpoonup Q_0|_{\partial B_r}$ in $H^{1/2}(\partial B_r)$. Moreover, since the operator $\Lambda-\Lambda_{0}:H^{1/2}(\partial B_{r})\to H^{-1/2}(\partial B_{r})$ is compact, we have $(\Lambda-\Lambda_0)Q_n\to (\Lambda-\Lambda_0)	Q_0$ in $H^{-1/2}(\partial B_r)$. From (\ref{equ:2.21}), we find that
\begin{center}
$a_4(Q_n-Q_0,\varphi)=\left \langle\mathcal{R}_2(Q_n-Q_0),\varphi\right \rangle.$
\end{center}
Then, let $\varphi=\mathcal{R}_2(Q_n-Q_0)$, and combining this with (\ref{equ:2.21}), we have
\begin{equation*}
\begin{aligned}
	\|\mathcal{R}_{2}(Q_{n}-Q_{0})\|^2_{H^{1}(B_r\backslash\overline{G})}&\leq\|(\Lambda-\Lambda_{0})(Q_{n}-Q_{0})\|_{H^{-1/2}(\partial B_{r})}\|\mathcal{R}_{2}(Q_{n}-Q_{0})\|_{H^{1/2}(\partial B_{r})}\\&\leq C_{3}M\|(\Lambda-\Lambda_{0})(Q_{n}-Q_{0})\|_{H^{-1/2}(\partial B_{r})}\|\mathcal{R}_{2}\|\to0.
    \end{aligned}\end{equation*}
	Clearly, we have $\|\mathcal{R}_{2}(Q_{n}-Q_{0})\|^2_{H^{1}(B_r\backslash\overline{G})}\to0$. Thus, we have proved the compactness of $\mathcal{R}_{2}$ and can conclude that
	\begin{center}	$\left \langle(\mathcal{S}-\mathcal{R}_1-\mathcal{R}_2) (Q_n-Q_0), \varphi\right \rangle=\mathcal{F}(\varphi).$
	\end{center} 
	Since $\mathcal{S}$ is a bounded and invertible operator, and $\mathcal{R}_{1}+\mathcal{R}_{2}$ is a compact operator, we conclude that $\mathcal{S}-(\mathcal{R}_1+\mathcal{R}_2)$ is a Fredholm operator of index zero, as stated by the Riesz representation theorem. By the uniqueness of (\ref{equ:2.1}), $(\mathcal{S}-(\mathcal{R}_1+\mathcal{R}_2))^{-1}$  is bounded. This implies that
	\begin{equation*}\begin{aligned}\left|\mathcal{F}(\varphi)\right|&\leq C_4(\|h_2\|_{H^{-1/2}(\partial\Omega)}+\|h_1\|_{H^{1/2}(\partial\Omega)}+\|f\|_{L^{2}(B_r\setminus\overline{\Omega})})\|\varphi\|_{H^1({B_r\backslash\overline{G}})}.
    \end{aligned}\end{equation*} 
    Directly, we have
	\begin{equation*}\begin{aligned}
	\left\|v\right\|_{H^1({B_r\backslash\overline{G}})}+\left\|u^s\right\|_{H^1(B_r\setminus\overline{\Omega})}\leq & C_5\Big(\|w_0\|_{H^{1}(B_r\setminus\overline{G})}+\|h_2\|_{H^{-1/2}(\partial\Omega)}+\|h_1\|_{H^{1/2}(\partial\Omega)}\\
&\hspace{5cm}+\|f\|_{L^{2}(B_{r_{0}}\setminus\overline{\Omega})}\Big)\\&\leq C_6\Big(\|p\|_{H^{1/2}(\partial G)}+\|h_2\|_{H^{-1/2}(\partial\Omega)}+\|h_1\|_{H^{1/2}(\partial\Omega)}
\\
&\hspace{5cm}+\|f\|_{L^{2}(B_{r_{0}}\setminus\overline{\Omega})}\Big),
	\end{aligned}\end{equation*}
    where $C_3$, $C_4$, $C_5$ and $C_6$ are positive constants independent on $\varepsilon$. Hence, we finish the proof.
\end{proof}
\subsection{Energy estimations  on the effective medium scattering system (\ref{equ:1.5})}
In this subsection, we aim to demonstrate the existence of two significant energy estimations for the solution to the effective medium scattering system (\ref{equ:1.5}), which is associated with the time-harmonic incident wave.
\begin{lem}  \label{lem:2.4}                              
Let $u_{\varepsilon}\in H_{loc}^{1}(\mathbb{R}^{N})$  be the solution of (\ref{equ:1.5}). Then, there exists a small $\varepsilon_{0}>0$ such that for all $\varepsilon<\varepsilon_{0}$, we can obtain the following estimates:
\begin{align}
&\|u_{\varepsilon}\|_{H^{1}({B_r\backslash\overline{G}})}\leq C_{1}(\|f\|_{L^{2}(B_{r_{0}}\setminus\overline{\Omega})}+\|u^{i}\|_{H^{1}(B_{r}\setminus\overline{\Omega})})\label{equ:2.26},\\
& \|u_{\varepsilon}\|_{H^1(G)}\leq C_2\,\varepsilon^\frac{1}{2} (\|f\|_{L^2(B_{r_0}\setminus\overline{\Omega})}+\|u^i\|_{H^1(B_r\setminus\overline{\Omega})}),\label{equ:2.27}
\end{align}
where $C_1$ and $C_2$ are positive constants independent on $\varepsilon$. 
\end{lem}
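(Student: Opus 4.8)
The plan is to test the equations in \eqref{equ:1.5} against $\overline{u_\varepsilon}$, assemble a single global energy identity on $B_r$, and then run a bootstrap that couples the region outside the obstacles with the obstacle region $G$. Applying Green's first formula separately on $G$, on $\Omega\setminus\overline{G}$, and on $B_r\setminus\overline{\Omega}$, the two transmission conditions on $\partial G$ (namely $u_\varepsilon^-=u_\varepsilon^+$ and $\varepsilon^{-1}\partial_\nu u_\varepsilon^-=\partial_{\nu_A}u_\varepsilon^+$) and the two on $\partial\Omega$ make every interface integral cancel, while on $\partial B_r$ I substitute $\partial_\nu u_\varepsilon=\partial_\nu u^i+\Lambda u_\varepsilon^s$ from the definition of the DtN map $\Lambda$ and $u_\varepsilon=u^i+u_\varepsilon^s$. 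This produces
\begin{multline*}
\int_G\varepsilon^{-1}|\nabla u_\varepsilon|^2\,\mathrm{d}x+\int_{\Omega\setminus\overline{G}}\nabla\overline{u_\varepsilon}\cdot A\nabla u_\varepsilon\,\mathrm{d}x+\int_{B_r\setminus\overline{\Omega}}|\nabla u_\varepsilon|^2\,\mathrm{d}x-k^2\int_G(\eta_0+i\varepsilon^{-1}\tau_0)|u_\varepsilon|^2\,\mathrm{d}x\\
-k^2\int_{\Omega\setminus\overline{G}}q|u_\varepsilon|^2\,\mathrm{d}x-k^2\int_{B_r\setminus\overline{\Omega}}|u_\varepsilon|^2\,\mathrm{d}x=\int_{\partial B_r}\partial_\nu u_\varepsilon\,\overline{u_\varepsilon}\,\mathrm{d}s-\int_{B_r\setminus\overline{\Omega}}f\overline{u_\varepsilon}\,\mathrm{d}x.
\end{multline*}
Write $X:=\|u_\varepsilon\|_{H^1(B_r\setminus\overline{G})}$, $Y:=\|u_\varepsilon\|_{H^1(G)}$, and $D:=\|f\|_{L^2(B_{r_0}\setminus\overline{\Omega})}+\|u^i\|_{H^1(B_r\setminus\overline{\Omega})}$. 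The goal is to establish the two coupled bounds $X\le C(Y+D)$ and $Y\le C\varepsilon^{1/2}(X+D)$ with $C$ \emph{independent of} $\varepsilon$, and then absorb.

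For the first bound I observe that the pair $(u_\varepsilon|_{\Omega\setminus\overline{G}},u_\varepsilon^s)$ solves exactly the transmission problem \eqref{equ:2.1} with data $p=u_\varepsilon|_{\partial G}$, $h_1=u^i|_{\partial\Omega}$, $h_2=\partial_\nu u^i|_{\partial\Omega}$ and source $f$. Lemma~\ref{lem:2.3} then gives, with a constant that does not depend on $\varepsilon$,
\begin{multline*}
\|u_\varepsilon\|_{H^1(\Omega\setminus\overline{G})}+\|u_\varepsilon^s\|_{H^1(B_r\setminus\overline{\Omega})}\\
\le C\big(\|u_\varepsilon\|_{H^{1/2}(\partial G)}+\|u^i\|_{H^{1/2}(\partial\Omega)}+\|\partial_\nu u^i\|_{H^{-1/2}(\partial\Omega)}+\|f\|_{L^2(B_{r_0}\setminus\overline{\Omega})}\big).
\end{multline*}
Bounding the $u^i$ boundary norms by $\|u^i\|_{H^1(B_r\setminus\overline{\Omega})}$ (trace theorem), using $u_\varepsilon=u^i+u_\varepsilon^s$ in $B_r\setminus\overline{\Omega}$, and using the trace inequality $\|u_\varepsilon\|_{H^{1/2}(\partial G)}\le C\|u_\varepsilon\|_{H^1(G)}$ (legitimate since $u_\varepsilon^-=u_\varepsilon^+$ on $\partial G$), this reads $X\le C(Y+D)$. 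The decisive point is that this $C$ comes from Lemma~\ref{lem:2.3} and hence genuinely does not see $\varepsilon$, so the only $u_\varepsilon$-dependent datum on the right is the $\partial G$-trace.

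For the second bound I exploit the large damping in $G$. Taking the imaginary part of the global identity, every purely real term drops out, and since $\overline{\xi}\cdot\Im(A)\xi\le0$ and $\tau\ge0$ by \eqref{equ:10.1} and \eqref{A2},
\[
k^2\varepsilon^{-1}\tau_0\int_G|u_\varepsilon|^2\,\mathrm{d}x\le -\,\Im\int_{\partial B_r}\partial_\nu u_\varepsilon\,\overline{u_\varepsilon}\,\mathrm{d}s+\Big|\int_{B_r\setminus\overline{\Omega}}f\overline{u_\varepsilon}\,\mathrm{d}x\Big|.
\]
Splitting $\partial_\nu u_\varepsilon=\partial_\nu u^i+\Lambda u_\varepsilon^s$ and discarding the nonnegative radiated flux $\Im\int_{\partial B_r}\overline{u_\varepsilon^s}\,\Lambda u_\varepsilon^s\,\mathrm{d}s\ge0$, only cross and incident terms remain, each controlled by $C(X+D)^2$ through trace estimates on $\partial B_r$ (note $\|u_\varepsilon^s\|_{H^{1/2}(\partial B_r)}\le C(X+D)$). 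Hence $\int_G|u_\varepsilon|^2\le C\varepsilon(X+D)^2$. For the gradient I take the real part of Green's formula on $G$ alone; inserting $\varepsilon^{-1}\partial_\nu u_\varepsilon^-=\partial_{\nu_A}u_\varepsilon^+$ gives $\int_G|\nabla u_\varepsilon|^2=\varepsilon\big(\Re\int_{\partial G}\partial_{\nu_A}u_\varepsilon^+\,\overline{u_\varepsilon^+}\,\mathrm{d}s+k^2\eta_0\int_G|u_\varepsilon|^2\big)$, and the boundary pairing is estimated by $\|\partial_{\nu_A}u_\varepsilon^+\|_{H^{-1/2}(\partial G)}\,\|u_\varepsilon^+\|_{H^{1/2}(\partial G)}\le CX^2$, where the conormal trace is bounded using $\nabla\cdot(A\nabla u_\varepsilon)=-k^2qu_\varepsilon\in L^2(\Omega\setminus\overline{G})$. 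Together these yield $Y\le C\varepsilon^{1/2}(X+D)$.

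Finally I close the loop. Substituting $Y\le C\varepsilon^{1/2}(X+D)$ into $X\le C(Y+D)$ gives $X\le C^2\varepsilon^{1/2}X+C(C\varepsilon^{1/2}+1)D$; choosing $\varepsilon_0$ so small that $C^2\varepsilon^{1/2}\le\tfrac12$ for $\varepsilon<\varepsilon_0$ lets me absorb the $X$-term, which proves \eqref{equ:2.26} with an $\varepsilon$-independent $C_1$. Feeding this back into the second bound gives \eqref{equ:2.27} with $C_2$ independent of $\varepsilon$. The main obstacle is precisely this $\varepsilon$-independence: the exterior estimate and the obstacle estimate are mutually coupled, and the only mechanism that decouples them is the extra factor $\varepsilon^{1/2}$ produced by the damping, which must be made small enough to be absorbed. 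Two points require care in carrying out the argument, namely invoking Lemma~\ref{lem:2.3} with a constant that truly does not depend on $\varepsilon$, and getting the correct sign of $\Im\int_{\partial B_r}\overline{u_\varepsilon^s}\,\Lambda u_\varepsilon^s\,\mathrm{d}s$ so that the radiation term may be dropped rather than controlled.
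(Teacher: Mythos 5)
Your proposal is correct, and its two pillars are exactly the paper's: the piecewise Green's identities glued through the transmission conditions into a single energy identity whose imaginary part (via the sign conditions \eqref{equ:10.1}, the nonnegativity of $\Im q$, and the nonnegative radiated flux on $\partial B_r$) yields $\|u_\varepsilon\|_{L^2(G)}^2\leq C\varepsilon(X+D)^2$, whose real part restricted to $G$ yields the matching gradient bound, and the $\varepsilon$-independent a priori estimate of Lemma~\ref{lem:2.3} applied to $(u_\varepsilon|_{\Omega\setminus\overline{G}},u_\varepsilon^s)$ as a solution of \eqref{equ:2.1} with Dirichlet datum $p=u_\varepsilon|_{\partial G}$. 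The one place you genuinely depart from the paper is the closing step for \eqref{equ:2.26}: the paper assumes the estimate fails along a sequence, normalizes by $\|u_\varepsilon^n\|_{H^1(B_r\setminus\overline{G})}$, and derives a contradiction with the normalization by feeding the analogue of \eqref{equ:2.38} back into Lemma~\ref{lem:2.3}; you instead write the two coupled inequalities $X\leq C(Y+D)$ and $Y\leq C\varepsilon^{1/2}(X+D)$ and absorb the $C^2\varepsilon^{1/2}X$ term for $\varepsilon<\varepsilon_0$. The two closings are logically equivalent (both ultimately rest on the smallness of $\varepsilon^{1/2}$ against fixed constants), but your direct absorption is cleaner: it produces an explicit $\varepsilon_0$ and avoids the slightly delicate double limit ($n\to\infty$ versus $\varepsilon\to 0^+$) implicit in the paper's contradiction argument. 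Your two flagged points of care --- that the constant in Lemma~\ref{lem:2.3} does not see $\varepsilon$, and that $\Im\int_{\partial B_r}\overline{u_\varepsilon^s}\,\Lambda u_\varepsilon^s\,\mathrm{d}s\geq 0$ so the flux term may be discarded with the correct sign --- are indeed the load-bearing facts, and both hold as you use them.
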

\begin{proof}
By multiplying $\bar{u}_{\varepsilon}$ with the first equation of (\ref{equ:1.5}) and integrating over $G$, we can derive that	
\begin{center}
$\int_{G}\,\nabla\cdot(\varepsilon^{-1} \nabla {u}_{\varepsilon})\,\bar{u}_{\varepsilon}+k^{2}(\eta_0+i\varepsilon^{-1}\tau_0)|{u}_{\varepsilon}|^{2}\,\mathrm{d}x=0$.
\end{center} 
By applying  Green's first formula, we have
\begin{center}
$\int_{\partial G}\,\varepsilon^{-1}\frac{\partial{u}^{-}_{\varepsilon}}{\partial v}\,\bar{u}_{\varepsilon}ds-\int_{G}\varepsilon^{-1}\,|\nabla u_{\varepsilon}|^{2}dx+\int_{G}k^{2}(\eta_0+i\varepsilon^{-1}\tau_0)|{u}_{\varepsilon}|^{2}\,\mathrm{d}x=0.  $
\end{center}
According to the fifth equation of (\ref{equ:1.5}), we obtain that
\begin{equation}
\begin{aligned}
	\int_{\partial G}\,\frac{\partial{u}^{+}_{\varepsilon}}{\partial \nu_{A}}\,\bar{u}_{\varepsilon}\,\mathrm{d}s-\int_{G}\,\varepsilon^{-1}|\nabla u_{\varepsilon}|^{2}dx+\int_{G}k^{2}(\eta_0+i\varepsilon^{-1}\tau_0)|{u}_{\varepsilon}|^{2}\,\mathrm{d}x=0.  
	\label{equ:2.28}	
	\end{aligned}\end{equation}
	Utilizing the second equation of (\ref{equ:1.5}), we have 
	\begin{center}
\scalebox{1}{$\int_{\Omega\setminus\overline{G}}\nabla\cdot(A\nabla {u}_{\varepsilon})\bar{u}_{\varepsilon}+k^{2}q(x){u}_{\varepsilon}^{2}\,\mathrm{d}x=0.$}
\end{center}	
Furthermore, we can conclude that
\begin{equation}
\begin{aligned}
	\int_{\partial \Omega}\,\bar{u}_{\varepsilon}~\nu\cdot A\nabla{u}_{\varepsilon}\,\mathrm{d}s-	\int_{\partial G}\,\bar{u}_{\varepsilon}~\nu\cdot A\nabla{u}_{\varepsilon}\,\mathrm{d}s-\int_{\Omega\setminus\overline{G}}\,\nabla\bar u_{\varepsilon}\cdot A\nabla u_{\varepsilon} \,\mathrm{d}x+\int_{\Omega\setminus\bar{G}}k^{2}q(x){u}_{\varepsilon}^{2}\,\mathrm{d}x=0.\label{equ:2.29}
	\end{aligned}\end{equation}
	By adding (\ref{equ:2.28}) to (\ref{equ:2.29}), we have
	\begin{equation}	
\begin{aligned}
	&\int_{\partial \Omega}\,\bar{u}_{\varepsilon}~\nu\cdot A\nabla{u}_{\varepsilon}\,\mathrm{d}s-\int_{G}\,\varepsilon^{-1}|\nabla u_{\varepsilon}|^{2}\,\mathrm{d}x+\int_{G}\,k^{2}(\eta_0+i\varepsilon^{-1}\tau_0)|{u}_{\varepsilon}|^{2}\,\mathrm{d}x\\&-\int_{\Omega\setminus\overline{G}}\,\nabla\bar u_{\varepsilon}\cdot A\nabla u_{\varepsilon} \,\mathrm{d}x+\int_{\Omega\setminus\overline{G}}\,k^{2}q(x){u}_{\varepsilon}^{2}\,\mathrm{d}x=0.
	\label{equ:2.30}
	\end{aligned}\end{equation}
	Multiplying $\bar{u}^{s}_{\varepsilon}$ with the third equation of (\ref{equ:1.5}) and integrating it over $B_r\backslash\overline{\Omega}$, we have
	\begin{center}\scalebox{1.1}{$\int_{B_r\backslash\overline{\Omega}}\,\Delta {u}^{s}_{\varepsilon}\,\overline{u}^{s}_{\varepsilon}+k^{2}{u}^{s}_{\varepsilon}\,\overline{u}^{s}_{\varepsilon}\,\mathrm{d}x=\int_{B_r\backslash\overline{\Omega}}\,f\,\overline{u}^{s}_{\varepsilon}\,\mathrm{d}x.$}
	\end{center}
	Consequently, we can derive 
	\begin{equation}	
\begin{aligned}
	\int_{\partial B_r}\,\frac{\partial {u}^{s}_{\varepsilon}}{\partial v}\,\overline{u}^{s}_{\varepsilon}\,\mathrm{d}s-\int_{\partial \Omega}\,\frac{\partial {u}^{s}_{\varepsilon}}{\partial v}\,\overline{u}^{s}_{\varepsilon}\,\mathrm{d}s-\int_{B_r\setminus\overline{\Omega}}\,|\nabla u_\varepsilon^s|^2\,\mathrm{d}x+\int_{B_r\setminus\overline{\Omega}}\,k^{2}|{u}^{s}_{\varepsilon}|^2\,\mathrm{d}x=\int_{B_r\backslash\overline{\Omega}}\,f\,\,\overline{u}^{s}_{\varepsilon}\,\mathrm{d}x.		\label{equ:2.31}
	\end{aligned}\end{equation} 
	Then, by adding (\ref{equ:2.30}) and (\ref{equ:2.31}), while considering the boundary condition in (\ref{equ:1.5}), we have
	\begin{equation}	
\begin{aligned}
	&\int_{\partial \Omega}\left(\frac{\partial u_{\varepsilon}^{s}}{\partial\nu}+\frac{\partial u^{i}}{\partial\nu}\right)\left(\overline u_{\varepsilon}^{s}+\overline u^{i}\right)\,\mathrm{d}s-\int_{G}\varepsilon^{-1}|\nabla u_{\varepsilon}|^{2}\,\mathrm{d}x+\int_{G}k^{2}(\eta_0+i\varepsilon^{-1}\tau_0)|{u}_{\varepsilon}|^{2}\,\mathrm{d}x
	\\&-\int_{\Omega\setminus\overline{G}}\,\nabla\bar u_{\varepsilon}\cdot A\nabla u_{\varepsilon} \,\mathrm{d}x+\int_{\Omega\setminus\overline{G}}\,k^{2}q(x){u}_{\varepsilon}^{2}\,\mathrm{d}x+\int_{\partial B_r}\,\frac{\partial {u}^{s}_{\varepsilon}}{\partial v}\overline{u}^{s}_{\varepsilon}\,\mathrm{d}s-\int_{\partial \Omega}\,\frac{\partial {u}^{s}_{\varepsilon}}{\partial v}\overline{u}^{s}_{\varepsilon}\,\mathrm{d}s\\&-\int_{B_r\setminus\overline{\Omega}}\,|\nabla u_\varepsilon^s|^2\,\mathrm{d}x+\int_{B_r\setminus\overline{\Omega}}\,k^{2}|{u}^{s}_{\varepsilon}|^2\,\mathrm{d}x=\int_{B_r\backslash\overline{\Omega}}\,f\,\overline{u}^{s}_{\varepsilon}\,\mathrm{d}x.
	\label{equ:2.32}
	\end{aligned}\end{equation}
	By taking the imaginary and real parts of (\ref{equ:2.32}), we can derive the following equations
	\begin{equation*}	
\begin{aligned}&\int_{G}\,k^{2}\varepsilon^{-1}\tau_0|{u}_{\varepsilon}|^{2}\,\mathrm{d}x+\int_{\Omega\setminus\overline{G}}\,k^{2}\Im q(x){u}_{\varepsilon}^{2}\,\mathrm{d}x+\Im\int_{\partial\Omega}\,\frac{\partial u_{\varepsilon}^{s}}{\partial\nu}\bar{u}^{i} \,\mathrm{d}s+\Im\int_{\partial\Omega}\,\frac{\partial u^i}{\partial\nu}\bar{u}_\varepsilon^s\,\mathrm{d}s\\+&\Im\int_{\partial\Omega}\,\frac{\partial u^{i}}{\partial\nu}\bar{u}^{i}\,\mathrm{d}s+\Im\int_{\partial B_{r}}\,\frac{\partial u_{\varepsilon}^{s}}{\partial\nu}\bar{u}_{\varepsilon}^{s}\,\mathrm{d}s-\Im\int_{\Omega\setminus\overline{G}}\,\nabla\bar u_{\varepsilon}\cdot A\nabla u_{\varepsilon} \,\mathrm{d}x=\Im\int_{B_{r}\setminus\overline{\Omega}}\,f\,\bar{u}_{\varepsilon}^{s}\,\mathrm{d}x,
\end{aligned}\end{equation*}
and
\begin{equation*}	
\begin{aligned}
	-&\int_{G}\varepsilon^{-1}|\nabla u_{\varepsilon}|^{2}\,\mathrm{d}x+\int_{G}k^{2}\eta_0|{u}_{\varepsilon}|^{2}\,\mathrm{d}x-\Re\int_{\Omega\setminus\overline{G}}\,\nabla\bar u_{\varepsilon}\cdot A\nabla u_{\varepsilon} \,\mathrm{d}x+\Re\int_{\partial\Omega}\frac{\partial u^{i}}{\partial\nu}\bar{u}^{i}\,\mathrm{d}s\\+&\Re\int_{\partial B_{r}}\frac{\partial u_{\varepsilon}^{s}}{\partial\nu}\bar{u}_{\varepsilon}^{s}\,\mathrm{d}s+\Re\int_{\partial\Omega}\frac{\partial u_{\varepsilon}^{s}}{\partial\nu}\bar{u}^{i} \,\mathrm{d}s+\Re\int_{\partial\Omega}\frac{\partial u^i}{\partial\nu}\bar{u}_\varepsilon^s\,\mathrm{d}s+\int_{\Omega\setminus\overline{G}}\,k^{2}\Re\ q(x)|{u}_{\varepsilon}|^{2}\,\mathrm{d}x\\-&\int_{B_r\setminus\overline{\Omega}}\,|\nabla u_\varepsilon^s|^2\,\mathrm{d}x+\int_{B_r\setminus\overline{\Omega}}\,k^{2}|{u}^{s}_{\varepsilon}|^2\,\mathrm{d}x=\Re\int_{B_{r}\setminus\overline{\Omega}}\,f\,\bar{u}_{\varepsilon}^{s}\,\mathrm{d}x.
	\end{aligned}\end{equation*}
	Then, we can readily obtain	
\begin{align}
\|u_\varepsilon\|_{L^2(G)}^2&\leq C_3\varepsilon\Big(\|u_\varepsilon\|_{H^1(\Omega\setminus\overline{G})}^2+(\|u^{i}\|_{H^1(B_r\setminus\overline{\Omega})}+\|u^{s}_\varepsilon\|_{H^1(B_r\setminus\overline{\Omega})})^2\nonumber\\
&\hspace{4cm}+\|f\|_{L^2(B_r\setminus\overline{\Omega})}\|u^{s}_\varepsilon\|_{H^1(B_r\setminus\overline{\Omega})}\Big)
\nonumber\\&\leq C_4\varepsilon\left(\|u_\varepsilon\|_{H^1(B_r\setminus\overline{G})}^2+\|u^{i}\|_{H^1(B_r\setminus\overline{\Omega})}^2+\|f\|_{L^2(B_r\setminus\overline{\Omega})}^2\right).\label{equ:2.35}\end{align}
\begin{equation}\begin{aligned}
	\|\nabla u_\varepsilon\|_{L^2(G)}^2&\leq C_5\varepsilon\left(\| u_\varepsilon\|_{L^2(G)}^2+\| u_\varepsilon\|_{H^1(B_r\setminus\overline{G})}^2+\|u^{i}\|_{H^1(B_r\setminus\overline{\Omega})}^2+\|f\|_{L^2(B_r\setminus\overline{\Omega})}^2\right),\label{equ:2.36}\hspace*{15mm}
\end{aligned}\end{equation}
where $ C_3$, $ C_4$, and $ C_5$ are positive constants depending on $q$, $\eta_0$,$\tau_0$, $B_r$, $k$, $\Omega$, $\gamma$, and $G$.
 For sufficiently small $\varepsilon$,  it holds that $\varepsilon^2+\varepsilon \leq 2\varepsilon$.	The following result follows by  (\ref{equ:2.35}) and (\ref{equ:2.36}),
\begin{equation}	
\begin{aligned}
	\|u_\varepsilon\|_{H^1(G)}\leq C_6\varepsilon^{\frac{1}{2} }\left(\| u_\varepsilon\|_{H^1(B_r\setminus\overline{G})}^2+\|u^{i}\|_{H^1(B_r\setminus\overline{\Omega})}^2+\|f\|_{L^2(B_r\setminus\overline{\Omega})}^2\right)^{\frac{1}{2} },	\label{equ:2.38}	\end{aligned}\end{equation} where $C_6$ is a positive constant not relying on $\varepsilon$.
	
Next, we begin the proof that the inequality (\ref{equ:2.26}) holds. Suppose that (\ref{equ:2.26}) is not true. Without loss of generality, we assume that for any nonnegative integer $n\in\mathbb{N}$,  there exists a set of data $(f^n,u^{i}_n, {u^n_\varepsilon})$, where ${u^n_\varepsilon}$ is the unique solution of (\ref{equ:1.5}) with 
	$f^n,u^{i}_n$ as inputs and satisfies the restrictions
	\begin{equation*}
\begin{cases}\|f^n\|_{L^2(B_{r_0}\setminus\overline{\Omega})}+\|u_n^{i}\|_{H^1(B_r\setminus\overline{\Omega})}=1,\vspace*{3mm}\\
	\|u_{\varepsilon}^{n}\|_{H^{1}(B_r\setminus\overline{G})}\to\infty\quad\text{as}\quad\varepsilon\to0.
\end{cases}
\end{equation*}
Therefore, we can construct another set of data
$(\widetilde{f}^n,\widetilde{g}^{i},\widetilde{g})$ as follows \begin{equation*}
\left\{\begin{array}{ll}\widetilde{f}^{n}=\frac{f^{n}}{\|{u}^{n}_\varepsilon\|_{H^{1}(B_{r}\setminus\overline{G})}},\quad\widetilde{g}=\frac{{u}^{n}_\varepsilon}{\|{u}^{n}_\varepsilon\|_{H^{1}(B_{r}\setminus\overline{G})}},\vspace*{3mm}\\\widetilde{g}^{i}=\frac{u_{n}^{i}}{\|{u}^{n}_\varepsilon\|_{H^{1}(B_{r}\setminus\overline{G})}},\quad\widetilde{g}^{s}=\frac{{u}^{n,s}_\varepsilon}{\|{u}^{n}_\varepsilon\|_{H^{1}(B_{r}\setminus\overline{G})}},\\[3.8ex]{u}^{n,s}_\varepsilon={u}^{n}_{\varepsilon}-u_{n}^{i},\quad\quad\widetilde{g}^{s}=\widetilde{g}-\widetilde{g}^{i}.\end{array}\right.
\end{equation*}
Clearly, $\widetilde{g}\in H_{loc}^1(\mathbb{R}^N)$  is the unique solution of (\ref{equ:1.5}) with the incident wave $\widetilde{g}^{i}$ and the source $\widetilde{f}^n$. Then we have
\begin{equation}\begin{aligned}
	\left\|\widetilde{g}\right\|_{H^{1}(B_{r}\setminus\overline{G})}=1,\quad\|\widetilde{f}^n\|_{L^2(B_{r_0}\setminus\overline{\Omega})}\to0,\quad\|\widetilde{g}^{i}\|_{H^1(B_r\setminus\overline{\Omega})}\to0\quad\mathrm{as}\quad\varepsilon\to0^+.
	\label{equ:2.41}
\end{aligned}
\end{equation}
Hence, one implies that
\begin{equation*}\begin{aligned}
	\|\widetilde{g}\|_{H^1(G)}\leq& C_7\varepsilon^{\frac{1}{2} }\left(\| \widetilde{g}\|_{H^1(B_r\setminus\overline{G})}^2+\|\widetilde{g}^{i}\|_{H^1(B_r\setminus\overline{\Omega})}^2+\|\widetilde{f}^{n}\|_{L^2(B_r\setminus\overline{\Omega})}^2\right)^{\frac{1}{2} }\leq&C_8\varepsilon^{\frac{1}{2} },
\end{aligned}
\end{equation*}
where $C_7$ and $C_8$ are positive constants are positive constants independent of $\varepsilon$. From Lemma \ref{lem:2.3} and the sixth equation of (\ref{equ:1.5}), we conclude that $(\widetilde{g}\big|_{\Omega\setminus\overline{G}}, \widetilde{g}^s\big|_{\mathbb{R}^N\setminus\overline{\Omega}})$  is the unique solution of (\ref{equ:2.1}) with $p=\widetilde{g}\Big|_{\partial G}$, $h_{1}=\widetilde{g}^{i}\big|_{\partial\Omega}$, and $h_2=\left.\frac{\partial\widetilde{g}^{i}}{\partial\nu}\right|_{\partial\Omega}$. By Lemma \ref{lem:2.3}, we obtain 
\begin{equation*} \begin{aligned}
	\|\widetilde{g}\|_{H^{1}(B_{r}\setminus\overline{G})}& \leq C_{9}\left(\|\widetilde{g}\|_{H^{1/2}(\partial G)}+\|\widetilde{f}^{n}\|_{L^{2}(B_{r_{0}}\setminus\overline{\Omega})}+\|\widetilde{g}^{i}\|_{H^{1}(B_{r}\setminus\overline{\Omega})}\right)\\
	&\leq C_{10}\left(	\|\widetilde{g}\|_{H^{1}(G)}+\|\widetilde{f}^{n}\|_{L^{2}(B_{r_{0}}\setminus\overline{\Omega})}+\|\widetilde{g}^{i}\|_{H^{1}(B_{r}\setminus\overline{\Omega})}\right),	
	\end{aligned}\end{equation*}
	where $C_9$ and $C_{10}$ are positive constants not relying on $\varepsilon$. In addition, we observe that  $\left\|\widetilde{g}\right\|_{H^1(B_r\setminus\overline{G})} $ to $ 0$ as $\varepsilon\to 0$, which contradicts the equality  	$\left\|\widetilde{g}\right\|_{H^{1}(B_{r}\setminus\overline{G})}=1$ given in \eqref{equ:2.41}. Hence, the inequality (\ref{equ:2.26}) holds. Further, we try to prove (\ref{equ:2.27}). From (\ref{equ:2.26}) and (\ref{equ:2.38}), it is easy to obtain that (\ref{equ:2.27}) holds. 
\par	Thus we complete the proof.
\end{proof}

\section{Proofs of Theorem \ref{thm:1.1} and Theorem \ref{thm:1.2} }\label{sec:4}
This section will use the lemmas in Section \ref{sec:3} to provide detailed proofs of Theorem \ref{thm:1.1} and Theorem \ref{thm:1.2}. Before proceeding, we first establish the following estimates regarding the solutions $u$ and $u_\varepsilon$ to (\ref{equ:1.2}) and (\ref{equ:1.5}), respectively, as listed below.
\begin{prop}\label{prop:3.1}
Let $u_{\varepsilon}\in H_{loc}^{1}(\mathbb{R}^{N})$ and $u\in H_{loc}^1(\mathbb{R}^N\backslash\overline{G})$ be the solutions to (\ref{equ:1.5}) and (\ref{equ:1.2}) respectively. Then, for any $r>r_{0}$, there exist small $\varepsilon_{0}>0$ and $C>0$ such that the following estimate holds for $\varepsilon<\varepsilon_{0}$:
\begin{equation*}
\|u_{\varepsilon}-u\|_{H^{1}(B_{r}\setminus\overline{G})}\leq C\varepsilon^{1/2}\left(\|u^{i}\|_{H^{1}(B_{r}\setminus\overline{\Omega})}+\|f\|_{L^{2}(B_{r_{0}}\setminus\overline{\Omega})}\right).
\end{equation*} 
\end{prop}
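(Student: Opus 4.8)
The plan is to reduce the estimate to the linear well-posedness theory of the auxiliary transmission problem \eqref{equ:2.1}, feeding in the damping-induced smallness of $u_\varepsilon$ inside $G$. The pivotal observation is that the restriction $v_\varepsilon:=u_\varepsilon|_{\Omega\setminus\overline{G}}$, together with the exterior scattered field $u_\varepsilon^s:=u_\varepsilon-u^i$ on $\mathbb{R}^N\setminus\overline{\Omega}$, is itself a solution of \eqref{equ:2.1}. Indeed, on $\Omega\setminus\overline{G}$ the coefficients $\gamma_{\varepsilon(x)}$ and $q_{\varepsilon(x)}$ coincide with $A$ and $q$; the transmission conditions across $\partial\Omega$ in \eqref{equ:1.5} match those of \eqref{equ:2.1} with $h_1=u^i|_{\partial\Omega}$ and $h_2=\frac{\partial u^i}{\partial\nu}\big|_{\partial\Omega}$; and the source is the same $f$. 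The only place where $v_\varepsilon$ differs from the sound-soft field $u$ of \eqref{equ:1.2} is the Dirichlet datum on $\partial G$: since $u_\varepsilon$ is continuous across $\partial G$, $v_\varepsilon$ carries the trace $p_\varepsilon:=u_\varepsilon^+|_{\partial G}$, whereas $u$ vanishes there.

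First I would verify this identification by checking each condition of \eqref{equ:1.5} against \eqref{equ:2.1}, so that $(v_\varepsilon,u_\varepsilon^s)$ solves \eqref{equ:2.1} with data $(p_\varepsilon,\,u^i|_{\partial\Omega},\,\frac{\partial u^i}{\partial\nu}|_{\partial\Omega},\,f)$ and $(u,u^s)$ solves \eqref{equ:2.1} with data $(0,\,u^i|_{\partial\Omega},\,\frac{\partial u^i}{\partial\nu}|_{\partial\Omega},\,f)$. Subtracting and invoking linearity, the pair $w:=u_\varepsilon-u$ on $\Omega\setminus\overline{G}$ and $w^s:=u_\varepsilon^s-u^s$ solves \eqref{equ:2.1} with all data homogeneous except the Dirichlet datum $p=p_\varepsilon$ on $\partial G$. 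The energy estimate \eqref{equ:2.10} of Lemma \ref{lem:2.3} then gives
\[
\|w\|_{H^1(\Omega\setminus\overline{G})}+\|w^s\|_{H^1(B_r\setminus\overline{\Omega})}\leq C\|u_\varepsilon^+\|_{H^{1/2}(\partial G)}.
\]

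Next I would turn this boundary bound into the target estimate. By the trace theorem applied from inside $G$ (valid since $u_\varepsilon\in H_{loc}^1(\mathbb{R}^N)$ is continuous across $\partial G$), one has $\|u_\varepsilon^+\|_{H^{1/2}(\partial G)}\leq C\|u_\varepsilon\|_{H^1(G)}$, and the decisive estimate \eqref{equ:2.27} of Lemma \ref{lem:2.4} supplies the gain $\|u_\varepsilon\|_{H^1(G)}\leq C\varepsilon^{1/2}\big(\|f\|_{L^2(B_{r_0}\setminus\overline{\Omega})}+\|u^i\|_{H^1(B_r\setminus\overline{\Omega})}\big)$. Since $u_\varepsilon-u$ agrees with $w$ on $\Omega\setminus\overline{G}$ and with $w^s$ on $B_r\setminus\overline{\Omega}$, these two regions partition $B_r\setminus\overline{G}$ and the squared norms add, yielding the claimed bound $\|u_\varepsilon-u\|_{H^1(B_r\setminus\overline{G})}\leq C\varepsilon^{1/2}\big(\|u^i\|_{H^1(B_r\setminus\overline{\Omega})}+\|f\|_{L^2(B_{r_0}\setminus\overline{\Omega})}\big)$.

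The main obstacle is conceptual rather than computational: it lies in the structural reduction of the first step, namely recognizing that outside $G$ the effective-medium field and the sound-soft field are governed by the identical transmission system \eqref{equ:2.1}, so that their difference is driven solely by the small Dirichlet trace $u_\varepsilon^+|_{\partial G}$. Once this is seen, the proof is simply the composition of Lemma \ref{lem:2.3} (quantifying how the difference responds to boundary data) with the $\varepsilon^{1/2}$ gain of \eqref{equ:2.27} (quantifying how the large damping in $G$ forces the trace to be small), glued together by the trace theorem.
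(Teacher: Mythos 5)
Your proposal is correct and follows essentially the same route as the paper: the authors likewise set $Y=u_\varepsilon-u$, $Y^s=u_\varepsilon^s-u^s$, observe that $(Y,Y^s)$ solves \eqref{equ:2.1} with $f=h_1=h_2=0$ and $p=u_\varepsilon^+|_{\partial G}$, and then chain Lemma \ref{lem:2.3}, the trace theorem, and the estimate \eqref{equ:2.27} of Lemma \ref{lem:2.4} to obtain the $\varepsilon^{1/2}$ bound. No gaps to report.
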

\begin{proof} 
Let $Y=u_{\varepsilon}-u$ and $Y^s=u_\varepsilon^s-u^s$. From the boundary conditions of (\ref{equ:1.2}) and (\ref{equ:1.5}), we can easily verify that $(Y,\,Y^s)$ is the unique solution to (\ref{equ:2.1}) with the boundary conditions $f=h_1=h_2=0$ and $p=Y|_{\partial G}=u^+_{\varepsilon}|_{\partial G}$. From Lemma \ref{lem:2.3}, Trace theorem, and Lemma \ref{lem:2.4}, we obtain 
\begin{equation*} \begin{aligned} \|Y\|_{H^1(B_r\setminus\overline{G})}=\|u_\varepsilon-u\|_{H^1(B_r\setminus\overline{G})}&\leq C_1	\|u^+_{\varepsilon}\|_{H^{1/2}(\partial G)}\leq C_{2}	\|u_{\varepsilon}\|_{H^{1}(G)}\\&\leq C_{3}\varepsilon^{1/2}\left(\|u^{i}\|_{H^{1}(B_{r}\setminus\overline{\Omega})}+\|f\|_{L^{2}(B_{r_{0}}\setminus\overline{\Omega})}\right), 
\end{aligned}\end{equation*}
where $C_1$, $C_{2}$ and $C_{3}$ are positive constants not relying on $\varepsilon$. Thus, we finish the proof.

\end{proof}

 We are in the position to give the proofs of Theorem \ref{thm:1.1} and Theorem \ref{thm:1.2}.
\begin{proof}[Proof of Theorem~\ref{thm:1.1}]
We utilize the following explicit expressions for the scattering amplitude of $u^s_\varepsilon$
\begin{equation*}\begin{aligned}
	{u}^{\infty}_\varepsilon(\hat{x})=\psi\int_{\partial B_{r}}\Big\{{u}^{s}_\varepsilon(y) \frac{\partial e^{-ik \hat{x}\cdot y}}{\partial\nu(y)}-\frac{\partial{u}^{s}_\varepsilon(y)}{\partial\nu(y)} e^{-i k \hat{x}\cdot y}\Big\}\mathrm{d}s(y)\quad\hat{x}\in\mathbb{S}^{N-1},	
	\end{aligned}\end{equation*}
    \begin{equation*}\begin{aligned}u^{\infty}(\hat{x})=\psi\int_{\partial B_{r}}\Big\{u^{s}(y) \frac{\partial e^{-ik \hat{x}\cdot y}}{\partial\nu(y)}-\frac{\partial u^{s}(y)}{\partial\nu(y)}e^{-ik\hat{x}\cdot y}\Big\}\,\mathrm{d}s(y)\quad\hat{x}\in\mathbb{S}^{N-1},	
	\end{aligned}\end{equation*}
	where $\psi=\frac1{4\pi}$ for N=3, $\psi=\frac{e^{i\pi/4}}{\sqrt{8\pi k}}$ for N=2.
	Then
	\begin{equation*}\begin{aligned}{u}^{\infty}_\varepsilon(\hat{x})-u^{\infty}(\hat{x})=\psi\int_{\partial B_{r}}\Big\{Y^{s}(y) \frac{\partial e^{-i k \hat{x}\cdot y}}{\partial\nu(y)}-\frac{\partial Y^{s}(y)}{\partial\nu(y)} e^{-i k \hat{x}\cdot y}\Big\}\mathrm{d}s(y)\quad\hat{x}\in\mathbb{S}^{N-1}.	
	\end{aligned}\end{equation*}
Note that \begin{center}
$\begin{cases}\left|\frac{\partial e^{-ik\hat{x}\cdot y}}{\partial\nu}\right|\leq M_1r, \\[2.5ex] \left|e^{-ik\hat{x}\cdot y}\right|~\leq M_2,\end{cases}$
\end{center}
where $\left|\cdot\right|$ denotes the Frobenius norm for a matrix or the Euclidean norm for a vector, $M_1$ and $M_2$ are positive constants, and $\nu$ denotes the unit outward normal to the boundary. One can derive
 the following estimate by Trace theorem and Proposition \ref{prop:3.1},
\begin{center}$\begin{aligned}
	\left\|{u}_\varepsilon^{\infty}(\hat{x})-u^{\infty}(\hat{x})\right\|_{C(\mathbb{S}^{N-1})}& \leq M_{3}\left(\|Y\|_{H^{1/2}(B_r\setminus\overline{G})}+\left\|\frac{\partial Y(y)}{\partial\nu(y)}\right\|_{H^{-1/2}(\partial B_{r})}\right) \\
	&\leq M_4\|Y\|_{H^1(B_r\setminus\overline{G})}+M_5\|Y\|_{H^1(B_r\setminus\overline{G})}\\
	&\leq M_{6} \, \varepsilon^{1/2}\left(\|u^{i}\|_{H^{1}(B_{r}\setminus\overline{\Omega})}+\|f\|_{L^{2}(B_{r_{0}}\setminus\overline{\Omega})}\right),
\end{aligned}$\end{center}
where $M_3$, $ M_4 $,  $M_5 $,  $M_6 $ are positive constants depending only on $k$, $\gamma$, $B_{r_{0}}\backslash\overline{\Omega}$, and $B_r\setminus\overline{G}$.
\par Thus, the proof is complete.
\end{proof}
\begin{proof}[Proof of Theorem~\ref{thm:1.2}]
Based on the trace theorem and Lemma \ref{lem:2.4}, we have
\begin{equation*}\begin{aligned}
	\|u^+_\varepsilon\|_{H^{1/2}(\partial G)}\leq\widetilde{C}_{1}\,\|u_\varepsilon\|_{H^1(G)}\leq \widetilde{C}_{2}\,\varepsilon^{\frac{1}{2}}\left(\|f\|_{L^{2}(B_{r_{0}}\setminus\overline{\Omega})}+\|u^{i}\|_{H^{1}(B_{r}\setminus\overline{\Omega})}\right),
    \end{aligned}\end{equation*}
where $\widetilde{C}_{1}$ and  $\widetilde{C}_{2}$ are positive constants not relying on $\varepsilon$.

The proof is complete.
\end{proof}

\section{Sharp estimates of the convergence for a special case}\label{sec:5}
For the system model represented in \eqref{equ:1.2}, we consider a specific scenario in which $G=B_{r_{1}}$ and $\Omega=\mathbb{R}^3$, with $B_{r_1}$ denoting a ball of radius  $r_1$ centered at the origin. In this context, we designate $G$ being filled with an inhomogeneous medium, while the surrounding region is characterized by a homogeneous medium. Our focus is on the scattering phenomena induced by incident plane waves, with the source term  $f = 0$. Our goal is to establish the validity of Theorem \ref{thm:1.1} and Theorem \ref{thm:1.2} within this distinct three-dimensional context, as encapsulated in Theorem \ref{thm:4.1} and Theorem \ref{thm:4.2} below. A similar methodology can also be extended to the two-dimensional scenario. By combining the above assumptions, we can reformulate \eqref{equ:1.5} and \eqref{equ:1.2} in the following forms. Let $u_{\varepsilon}(x) \in H_{loc}^{1}(\mathbb{R}^{3})$ satisfy the following system:
\begin{equation}\begin{cases}\nabla\cdot(\varepsilon^{-1}\nabla u_{\varepsilon})+k^{2}(\eta_{0}+i\varepsilon^{-1}\tau_{0})u_{\varepsilon}=0     ~~~\quad&\text{in}~~G,\vspace*{1mm}\\\Delta u_{\varepsilon}+k^{2}u_{\varepsilon}=0&\text{in}~~\mathbb{R}^{3}\setminus\overline{G},\vspace*{1mm}\\u_{\varepsilon}(x)=e^{ikx\cdot d}+u_{\varepsilon}^{s}(x)&\text{in}~~\mathbb{R}^{3}\setminus\overline{G},\vspace*{1mm}\\u_{\varepsilon}^{-}=u_{\varepsilon}^{+}, \quad \varepsilon^{-1}\frac{\partial u_{\varepsilon}^{-}}{\partial\nu}=\frac{\partial u_{\varepsilon}^{+}}{\partial\nu}&\text{on}~~\partial G, \vspace*{2mm}\\\lim\limits_{|x|\to\infty}|x|\left\{\frac{\partial u_{\varepsilon}^{s}}{\partial|x|}-iku_{\varepsilon}^{s}\right\}=0.\label{equ:4.1}\end{cases}\end{equation}
Additionally, (\ref{equ:1.2}) can be reduced to the following form\begin{equation}\begin{cases}
\Delta u+k^2u=0&\text{in}~ \mathbb{R}^3\setminus\overline{G},\vspace*{1mm}\\u(x)=e^{ikx\cdot d}+u^s(x)&\text{in}~ \mathbb{R}^3\setminus\overline{G},\vspace*{1mm}\\u=0&\text{on} ~\partial G,\vspace*{1mm}\\\lim\limits_{|x|\to\infty}|x|\left\{\frac{\partial u^s}{\partial|x|}-iku^s\right\}=0.
\label{equ:4.2}\end{cases}\end{equation}
Define $q_0 := (\eta_0 + i \varepsilon^{-1}\tau_0)\varepsilon = \varepsilon\eta_0 + i\tau_0 = r e^{i\theta}$, where $r = \sqrt{\varepsilon^2 \eta_0^2 + \tau_0^2}$ and $\theta = \operatorname{Arg}(q_0)$. The complex roots are given by $w_t = \sqrt{r} e^{i \left( \frac{\theta}{2} + t\pi \right)}$ for $t = 0, 1$. Despite some potential misuse of notation, we adopt $\sqrt{q_0}$ to represent the branch corresponding to $t = 0$. We utilize the spherical wave series expansions of the solutions presented in (\ref{equ:4.1}) and (\ref{equ:4.2}). For a detailed discussion of spherical wave functions, refer to \cite{ref2}. Let $u_{\varepsilon}\left(x\right) $, $u_{\varepsilon}^{s}\left(x\right)$ and $u^{s}\left(x\right)$ be given by 
\begin{align}
u_{\varepsilon}(x)&=\sum_{n=0}^{\infty}\sum_{m=-n}^{n}b_{n}^{m} j_{n}(k\sqrt{q_{0}}|x|)Y_{n}^{m}(\hat{x})~\quad x\in B_{r_{1}}, 	\nonumber\\
\quad u_{\varepsilon}^{s}(x)&=\sum_{n=0}^{\infty}\sum_{m=-n}^{n}a_{n}^{m}h_{n}^{(1)}(k|x|)Y_{n}^{m}(\hat{x})~\qquad x\in\mathbb{R}^3\setminus\overline{B_{r_{1}} },\label{equ:4.4}\\
u^{s}(x)&=\sum_{n=0}^{\infty}\sum_{m=-n}^{n}c_{n}^{m}h_{n}^{(1)}(k|x|)Y_{n}^{m}(\hat{x})~\quad\quad x\in\mathbb{R}^3\setminus\overline{B_{r_{1}}},\label{equ:4.5}
\end{align}
where $\hat{x}=x/|x|\in\mathbb{S}^2$. The plane wave can be expressed by using the following series representation:
\begin{equation}\begin{aligned}
e^{ikx\cdot d}=\sum_{n=0}^{\infty}\sum_{m=-n}^{n}i^{n}4\pi\overline{Y_{n}^{m}(d)}j_{n}(k|x|)Y_{n}^{m}(\hat{x}),	\label{equ:4.6}\end{aligned}\end{equation}
where $d$ is the direction of propagation. From (\ref{equ:4.5}), (\ref{equ:4.6}) and the third equation of (\ref{equ:4.2}), we obtain  \begin{equation}\begin{aligned}
c_n^m=\frac{-i^n4\pi\overline{Y_n^m(d)}j_n(kr_1)}{h_n^{(1)}(kr_1)}.
\label{equ:4.7}\end{aligned}\end{equation}
Next, by applying the transmission boundary conditions in  (\ref{equ:4.1}) and comparing the coefficients of  $Y_n^m(\hat{x})$, we derive
\begin{equation}\begin{cases}b_n^mj_n(k\sqrt{q_0}r_1)=a_n^mh_n^{(1)}(kr_1)+i^n4\pi\overline{Y_n^m(d)}j_n(kr_1),\vspace*{5mm}\\
k\sqrt{q_0}b_n^mj_n^{\prime}(k\sqrt{q_0}r_1)=\varepsilon\left(ka_n^mh_n^{(1)'}(kr_1)+4k\pi i^n\overline{Y_n^m(d)}j_n^{\prime}(kr_1)\right).\label{equ:4.8}\end{cases}\end{equation}
By solving (\ref{equ:4.8}), we obtain
\begin{equation}
\begin{cases}
a_n^m=-\frac{4\pi i^n\overline{Y_{n}^{m}(d)}\left(\varepsilon j_n(k\sqrt{q_0}r_1 )j_n^\prime(kr_1)-\sqrt{q_0}j_n(kr_1)j_n^\prime(k\sqrt{q_0}r_1)\right)}{\varepsilon j_n(k\sqrt{q_0}r_1)h_n^{(1)\prime}(kr_1)-\sqrt{q_0}h_n^{(1)}(kr_1)j_n^\prime(k\sqrt{q_0}r_1)} , \vspace{5mm}\\
b_n^m=\frac{4\pi i^n\varepsilon\overline{Y_{n}^{m}(d)}\left( j_n(kr_1)h_n^{(1)\prime}(kr_1)-h_n^{(1)}(kr_1)j_n^{\prime}(kr_1 )\right)}{\varepsilon j_n(k\sqrt{q_0}r_1)h_n^{(1)\prime}(kr_1)-\sqrt{q_0}h_n^{(1)}(kr_1)j_n^\prime(k\sqrt{q_0}r_1)}. 
\label{equ:4.9} \end{cases}\end{equation}
Then, we shall show the sharpness of the estimates in Theorem \ref{thm:1.1}.
\begin{thm}
Under the above of assumptions, the far-field patterns ${u}_\varepsilon^{\infty}{~and~}{u}^{\infty}$ corresponding to the solutions $u_{\varepsilon}~and~u$ of
systems (\ref{equ:4.1}) and (\ref{equ:4.2}), satisfy the following relationship \begin{equation*}\begin{aligned}\begin{vmatrix}{u}_\varepsilon^{\infty}(\hat{x})-u^{\infty}(\hat{x})\end{vmatrix}=C_\mathcal{\infty} \varepsilon^{1/2}+\mathcal{O}(\varepsilon), \quad\forall\hat{x}\in\mathbb{S}^2, 
\end{aligned}\end{equation*}
where $C_\mathcal{\infty}$ depends only on $\eta_0$, $\tau_0$, $k$, $r_1$, and $d$.\label{thm:4.1}\end{thm}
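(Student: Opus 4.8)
The plan is to work entirely with the explicit spherical‑wave series, since both scattered fields $u_\varepsilon^s$ and $u^s$ are expanded in the \emph{same} radiating basis $h_n^{(1)}(k|x|)Y_n^m(\hat x)$, with coefficients $a_n^m$ and $c_n^m$ furnished by \eqref{equ:4.9} and \eqref{equ:4.7}. Passing to the far field multiplies the $n$‑th coefficient by the fixed asymptotic factor of $h_n^{(1)}$, namely $h_n^{(1)}(k|x|)\sim (-i)^{n+1}e^{ik|x|}/(k|x|)$, so that
$u_\varepsilon^{\infty}(\hat x)-u^{\infty}(\hat x)=\frac1k\sum_{n=0}^{\infty}\sum_{m=-n}^{n}(a_n^m-c_n^m)(-i)^{n+1}Y_n^m(\hat x)$. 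Thus the whole statement reduces to the $\varepsilon$‑asymptotics of the coefficient differences $a_n^m-c_n^m$: I would extract the leading term of each and then resum.

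First I would put $a_n^m$ from \eqref{equ:4.9} and $c_n^m$ from \eqref{equ:4.7} over the common denominator $D_n\,h_n^{(1)}(kr_1)$. The numerator collapses: every term carrying $\sqrt{q_0}\,j_n'(k\sqrt{q_0}r_1)$ cancels, and the surviving factor is the Wronskian $j_n'(kr_1)h_n^{(1)}(kr_1)-j_n(kr_1)h_n^{(1)\prime}(kr_1)=-i/(kr_1)^2$. This leaves the clean identity
\[
a_n^m-c_n^m=\frac{4\pi\, i^{\,n+1}\,\overline{Y_n^m(d)}\;\varepsilon\, j_n(k\sqrt{q_0}\,r_1)}{(kr_1)^2\,D_n\,h_n^{(1)}(kr_1)},\qquad D_n:=\varepsilon\, j_n(k\sqrt{q_0}\,r_1)\,h_n^{(1)\prime}(kr_1)-\sqrt{q_0}\,h_n^{(1)}(kr_1)\,j_n'(k\sqrt{q_0}\,r_1),
\]
in which, after dividing $D_n$ by $\sqrt{q_0}\,j_n'(k\sqrt{q_0}r_1)$, the scaling is governed by the single quantity $\varepsilon/\sqrt{q_0}$ together with the Bessel quotient $j_n/j_n'$ evaluated at $k\sqrt{q_0}r_1$.

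With $q_0=\varepsilon\eta_0+i\tau_0$ fixed by \eqref{equ:4.1}, I would then expand every spherical Bessel and Hankel factor in this identity as $\varepsilon\to0^+$, carefully tracking the exact power of $\varepsilon$ produced by $\varepsilon/\sqrt{q_0}$ together with the behaviour of $j_n(k\sqrt{q_0}r_1)$ and $j_n'(k\sqrt{q_0}r_1)$. The leading contribution is \emph{isotropic}, because it is produced by the lowest spherical mode, whose angular factor $Y_0^0$ is constant in $\hat x$; this is precisely why the resulting constant $C_\infty$ is independent of $\hat x$ and depends only on $\eta_0,\tau_0,k,r_1$ and (through $\overline{Y_n^m(d)}$) the incident direction $d$. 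Collecting this dominant term yields the advertised $C_\infty\,\varepsilon^{1/2}$, while every remaining mode and every higher‑order term in the expansion contributes at order $\varepsilon$, giving the remainder $\mathcal{O}(\varepsilon)$.

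The main obstacle will be the uniform asymptotic bookkeeping in this last step. Concretely, I must (i) show that $D_n$ stays bounded away from $0$ as $\varepsilon\to0$, so that no spurious near‑resonance inflates a single mode; (ii) isolate the genuine $\varepsilon^{1/2}$ term from the $\mathcal{O}(\varepsilon)$ remainder instead of inadvertently absorbing one into the other, which requires the expansion of $j_n/j_n'$ at $k\sqrt{q_0}r_1$ to one order beyond the naive leading behaviour; and (iii) prove that the tail $\sum_{n\ge N}(a_n^m-c_n^m)(-i)^{n+1}Y_n^m(\hat x)$ is $\mathcal{O}(\varepsilon)$ uniformly in $\hat x\in\mathbb{S}^2$, exploiting the rapid decay of $j_n(kr_1)$ and the growth of $|h_n^{(1)}(kr_1)|$ in $n$, and then summing the finitely many leading angular contributions via the addition theorem for spherical harmonics into the closed constant $C_\infty$.
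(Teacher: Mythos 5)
Your setup coincides with the paper's: both start from the spherical--harmonic expansions \eqref{equ:4.4}--\eqref{equ:4.7}, form the coefficient difference over a common denominator, and collapse the numerator with the Wronskian $j_n(t)y_n'(t)-j_n'(t)y_n(t)=1/t^2$, arriving at essentially the identity you display, with numerator $4\pi i^{n+1}\varepsilon\,\overline{Y_n^m(d)}\,j_n(k\sqrt{q_0}r_1)/\bigl((kr_1)^2 h_n^{(1)}(kr_1)D_n\bigr)$. The divergence --- and the gap --- lies in how the $\varepsilon^{1/2}$ is supposed to appear.

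You propose to Taylor-expand each mode in $\varepsilon$ and to find the $\varepsilon^{1/2}$ term sitting in the lowest mode $n=0$. But your own point (i) undercuts this: since $q_0=\varepsilon\eta_0+i\tau_0\to i\tau_0$, one has $D_n\to-\sqrt{i\tau_0}\,h_n^{(1)}(kr_1)\,j_n'(k\sqrt{i\tau_0}\,r_1)\neq0$, so the denominator of your identity tends to a nonzero limit while the numerator carries an explicit factor $\varepsilon$. Hence each fixed mode gives $a_n^m-c_n^m=\mathcal{O}(\varepsilon)$, and its expansion contains only integer powers of $\varepsilon$; expanding $j_n/j_n'$ ``one order beyond the naive leading behaviour'' will never produce a half-power. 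The route you describe therefore cannot manufacture a genuine $C_\infty\varepsilon^{1/2}$ leading term, and the claim that this term is isotropic (carried by $Y_0^0$ alone) has no support in the algebra.

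The paper's argument at this point is different in kind. After reducing the denominator, via the recursion $P_n'(z)=-P_{n+1}(z)+\frac{n}{z}P_n(z)$ and the large-order asymptotics \eqref{equ:4.15}, to the form $a\varepsilon-ib$ with $a=1-\frac{k^2r_1^2\eta_0}{(2n+3)(n+1)}$ and $b=\frac{k^2r_1^2\tau_0}{(2n+3)(n+1)}$, it applies the elementary inequality $|a\varepsilon-ib|\geq\sqrt{2ab\varepsilon}$ to bound each mode by $\frac{1}{\sqrt{2ab}}\,\varepsilon^{1/2}$ times a rapidly decaying prefactor $\bigl|4\pi/(k^2r_1^2h_n^{(1)}(kr_1)h_n^{(1)\prime}(kr_1))\bigr|$, and then sums over all $n,m$ using $|\overline{Y_n^m(d)}Y_n^m(\hat x)|\leq\frac{2n+1}{4\pi}$; the constant $C_\infty$ is assembled from the entire series, not from $n=0$, and its independence of $\hat x$ comes from bounding the angular factors, not from isotropy of a leading mode. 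In short, the paper's $\varepsilon^{1/2}$ is produced by a mode-uniform inequality (needed precisely because $b=b(n)\sim n^{-2}$ degenerates as $n\to\infty$) and is then promoted to an asymptotic equality; it is not a term in an $\varepsilon$-expansion. If you insist on your exact-expansion route you must confront the fact that it yields $\mathcal{O}(\varepsilon)$ per mode and explain where the advertised $\varepsilon^{1/2}$ then originates; as written, your plan does not close.
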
\begin{proof} 
According to (\ref{equ:4.4}), (\ref{equ:4.5}), and (\ref{equ:4.7}), we have 
\begin{equation*}\begin{aligned}
	{u}_\varepsilon^{\infty}(\hat{x})=& \frac{1}{k}\sum_{n=0}^{\infty}\sum_{m=-n}^{n}\frac{1}{i^{n+1}}a_{n}^{m}Y_{n}^{m}(\hat{x}), \\
	{u}^{\infty}(\hat{x})=& \frac{i}{k}\sum_{n=0}^{\infty}4\pi \frac{j_{n}(kr_{1})}{h_{n}^{(1)}(kr_{1})}\sum_{m=-n}^{n}\overline{{Y_{n}^{m}(d)}}Y_{n}^{m}(\hat{x}). 
    \end{aligned}\end{equation*}
	Furthermore, we can obtain
\begin{equation*}
\begin{vmatrix}{u}_\varepsilon^{\infty}(\hat{x})-{u}^{\infty}(\hat{x})\end{vmatrix}=\begin{vmatrix}\frac{i}{k}\sum_{n=0}^{\infty}\sum_{m=-n}^{n}(\frac{-1}{i^{n}}a_{n}^{m}-4\pi \frac{j_{n}(kr_{1})}{h_{n}^{(1)}(kr_{1})}\overline{{Y_{n}^{m}(d)}})Y_{n}^{m}(\hat{x})\end{vmatrix}.
\end{equation*}
By a series of calculations, we have 
\begin{align*}
\begin{vmatrix}\frac{-1}{i^{n}}a_{n}^{m}-4\pi \frac{j_{n}(kr_{1})}{h_{n}^{(1)}(kr_{1})}\overline{{Y_{n}^{m}(d)}}\end{vmatrix}=&\begin{vmatrix}\frac{4\pi\left(-\varepsilon j_n(kr_1)j_n(k\sqrt{q_0}r_1)h_{n}^{(1)\prime}(kr_{1})+\varepsilon  h_{n}^{(1)}(kr_{1})j_n(k\sqrt{q_0}r_1)j_n^{\prime}(kr_1)\right)}{h_n^{(1)}(kr_1)\left(-\varepsilon j_n(k\sqrt{q_0}r_1)h_n^{(1)\prime}(kr_1)+\sqrt{q_0} h_n^{(1)}(kr_1)j_n^{\prime}(k\sqrt{q_0}r_1)\right)}\end{vmatrix}\\
&\times \begin{vmatrix}\overline{{Y_{n}^{m}(d)}}\end{vmatrix}.
\end{align*}
By using the Wronskian  $j_{n}(t)y_{n}^{\prime}(t)-j_{n}^{\prime}(t)y_{n}(t)=1/t^{2}$, then we have 	\begin{equation*}\begin{aligned}\begin{vmatrix}\frac{-1}{i^{n}}a_{n}^{m}-4\pi \frac{j_{n}(kr_{1})}{h_{n}^{(1)}(kr_{1})}\overline{{Y_{n}^{m}(d)}}\end{vmatrix}=&\begin{vmatrix}\frac{4\pi i\varepsilon j_n(k\sqrt{q_{0}}r_{1})}{k^{2}r_{1}^{2}h_n^{(1)}(kr_1)\left(-\varepsilon j_n(k\sqrt{q_0}r_1)h_n^{(1)\prime}(kr_1)+\sqrt{q_{0}} h_n^{(1)}(kr_1)j_n^{\prime}(k\sqrt{q_0}r_1)\right)}\end{vmatrix}\\
&\times	\begin{vmatrix}\overline{{Y_{n}^{m}(d)}}\end{vmatrix}.	
\end{aligned}\end{equation*}
From the following relation 
\begin{equation*}\begin{aligned}P_n^{\prime}(z)=-P_{n+1}(z)+\frac{n}{z}P_{n}(z)
\end{aligned}\end{equation*}
where $P_n(z)=j_n(z)$ or $ h_n^{(1)}(z)$, we have
\begin{equation*}\begin{aligned}
	\frac{\mathrm P_n'(k\sqrt{q_0}r_1)}{\mathrm P_n(k\sqrt{q_0}r_1)}\cdot\sqrt{q_{0}}=\frac{n}{kr_{1}}-\sqrt{q_{0}}\frac{P_{n+1}(k\sqrt{q_{0}}r_{1})}{P_n(k\sqrt{q_{0}}r_{1})}.
    \end{aligned}\end{equation*}
Clearly, based on the asymptotic behavior of $j_n(t)$ and $h_{n}^{(1)}(t)$, we have
\begin{align}
	j_n(t)&=\frac{t^n}{1\cdot3\cdots(2n + 1)}\left(1+\mathcal{O}\Big(\frac{1}{n}\Big)\right),\quad n\to\infty\nonumber\\
	h_n^{(1)}(t)&=\frac{1\cdot3\cdots(2n - 1)}{it^{n + 1}}\left(1+\mathcal{O}\left(\frac{1}{n}\right)\right),\quad n\to\infty.\label{equ:4.15}
\end{align}

Therefore, by utilizing \cite{ref15}, we can conclude that
\begin{equation*}\begin{aligned}\begin{vmatrix}\frac{-1}{i^{n}}a_{n}^{m}-4\pi \frac{j_{n}(kr_{1})}{h_{n}^{(1)}(kr_{1})}\overline{{Y_{n}^{m}(d)}}\end{vmatrix}=&\begin{vmatrix}\frac{4\pi i\varepsilon}{k^{2}r_{1}^{2}h_{n}^{(1)}(kr_{1})\left(-\varepsilon h_{n}^{(1)\prime}(kr_{1})+\sqrt{q_0}h_{n}^{(1)}(kr_{1})\frac{\mathrm j_n'(k\sqrt{q_0}r_1)}{\mathrm j_n(k\sqrt{q_0}r_1)}\right)}\end{vmatrix}\begin{vmatrix}\overline{{Y_{n}^{m}(d)}}\end{vmatrix}\\=&\begin{vmatrix}\frac{4\pi i\varepsilon}{k^{2}r_{1}^{2}h_{n}^{(1)}(kr_{1})\left(-\varepsilon h_{n}^{(1)\prime}(kr_{1})+(\frac{n}{kr_1}-\sqrt{q_0}\frac{k\sqrt{q_0}r_1}{2n+3})h_{n}^{(1)}(kr_{1})\right)}\end{vmatrix}\begin{vmatrix}\overline{{Y_{n}^{m}(d)}}\end{vmatrix}\\=&\begin{vmatrix}\frac{4\pi i\varepsilon}{k^{2}r_{1}^{2}h_{n}^{(1)\prime}(kr_{1})\left(-\varepsilon +(\frac{n}{kr_1}-\frac{k{q_0}r_1}{2n+3})\frac{h_{n}^{(1)}(kr_{1})}{h_{n}^{(1)\prime}(kr_{1})}\right)h_n^{(1)}(kr_1)}\end{vmatrix}\begin{vmatrix}\overline{{Y_{n}^{m}(d)}}\end{vmatrix}
	\\\leq&\begin{vmatrix}\frac{4\pi i}{k^{2}r_{1}^{2}h_{n}^{(1)}(kr_{1})h_{n}^{(1)^{\prime}}(kr_{1})}\end{vmatrix}\begin{vmatrix}\frac{\varepsilon}{\varepsilon\left(1-\frac{k^{2}r_{1}^{2}\eta_0}{(2n+3)(n+1)}\right)-i\frac{k^{2}r_{1}^{2}\tau_{0}}{(2n+3)(n+1)}}\end{vmatrix}\begin{vmatrix}\overline{{Y_{n}^{m}(d)}}\end{vmatrix}\\\leq&\begin{vmatrix}\frac{4\pi i}{k^{2}r_{1}^{2}h_{n}^{(1)}(kr_{1})h_{n}^{(1)^{\prime}}(kr_{1})}\end{vmatrix}\begin{vmatrix}\frac{\varepsilon}{a\varepsilon-ib}\end{vmatrix}
	\begin{vmatrix}\overline{{Y_{n}^{m}(d)}}\end{vmatrix}
	\\\leq&\begin{vmatrix}\frac{4\pi i}{k^{2}r_{1}^{2}h_{n}^{(1)}(kr_{1})h_{n}^{(1)^{\prime}}(kr_{1})}\end{vmatrix}\frac{\varepsilon}{\sqrt{2a\varepsilon b}}\begin{vmatrix}\overline{{Y_{n}^{m}(d)}}\end{vmatrix}\\=&\begin{vmatrix}\frac{4\pi i}{k^{2}r_{1}^{2}h_{n}^{(1)}(kr_{1})h_{n}^{(1)^{\prime}}(kr_{1})}\end{vmatrix}\frac{1}{\sqrt{2ab}}\varepsilon^{\frac{1}{2}}\begin{vmatrix}\overline{{Y_{n}^{m}(d)}}\end{vmatrix},
    \end{aligned}\end{equation*}
	where $a=1-\frac{k^{2}r_{1}^{2}\eta_0}{(2n+3)(n+1)}$ and $b=\frac{k^{2}r_{1}^{2}\tau_{0}}{(2n+3)(n+1)}.$ Furthermore, it is easy to derive
	\begin{equation*}\begin{aligned}\begin{vmatrix}\left(\frac{-1}{i^{n}}a_{n}^{m}-4\pi \frac{j_{n}(kr_{1})}{h_{n}^{(1)}(kr_{1})}\overline{{Y_{n}^{m}(d)}}\right)Y_n^m(\hat{x})\end{vmatrix}\sim& \begin{vmatrix}\frac{4\pi }{k^{2}r_{1}^{2}h_{n}^{(1)}(kr_{1})h_{n}^{(1)^{\prime}}(kr_{1})}\frac{1}{\sqrt{2ab}}\varepsilon^{\frac{1}{2}}\end{vmatrix}\begin{vmatrix}\overline{{Y_{n}^{m}(d)}}Y_n^m(\hat{x})\end{vmatrix}\\&\hspace{6cm}+O(\varepsilon). 
    \end{aligned}\end{equation*}
    Note that for any $n,m\in\mathbb{N}$,
    \begin{equation*}\begin{aligned}|\overline{Y_n^m(d)}Y_n^m(\hat{x})|\leq\frac{2n+1}{4\pi}.
    \end{aligned}\end{equation*}
	Therefore,
	\begin{center}$\begin{vmatrix}{u}_\varepsilon^{\infty}(\hat{x})-{u}^{\infty}(\hat{x})\end{vmatrix}\sim\begin{vmatrix}\frac{i}{k}\sum\limits_{n=0}^\infty\sum\limits_{m=-n}^n\frac{4\pi }{k^{2}r_{1}^{2}h_{n}^{(1)}(kr_{1})h_{n}^{(1)^{\prime}}(kr_{1})}\frac{1}{\sqrt{2ab}}\varepsilon^{\frac{1}{2}}\end{vmatrix}\begin{vmatrix}\frac{2n+1}{4\pi}\end{vmatrix},$\end{center}where $C_{\mathcal{\infty}}$ is chosen as
    \begin{center}$C_{\mathcal{\infty}}=\begin{vmatrix}\sum\limits_{n=0}^\infty\sum\limits_{m=-n}^n\frac{2n+1}{k^{3}r_{1}^{2}h_{n}^{(1)}(kr_{1})h_{n}^{(1)^{\prime}}(kr_{1})}\frac{1}{\sqrt{2ab}}\end{vmatrix}.$\end{center}Then, we finish the proof.\end{proof} 
The subsequent theorem formulates that the wave field  $u_{\varepsilon}$  exhibits an attenuation characteristic within the region of $\partial B_{r_{1}}$. 
\begin{thm}
There exists a constant $C_v$ which depends only on $k$, $r_1$, $d$, $\eta_0$, and $\tau_0$ such that 
\begin{equation*}\begin{aligned}	\left\| u_\varepsilon^+\right\|_{H^{1/2}(\partial B_{r_1})}=C_v \varepsilon^{1/2}+\mathcal{O}(\varepsilon).
\end{aligned}\end{equation*}\label{thm:4.2}
\end{thm}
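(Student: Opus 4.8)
The plan is to run the explicit separation-of-variables computation of Theorem~\ref{thm:4.1}, but now reading off the Dirichlet trace of $u_\varepsilon$ on $\partial B_{r_1}$ and measuring it in $H^{1/2}$ rather than forming a far field. Since $u_\varepsilon\in H^1_{loc}(\mathbb{R}^3)$ is continuous across $\partial B_{r_1}$, its exterior trace equals its interior one, so by \eqref{equ:4.4} the spherical-harmonic coefficients of $u_\varepsilon^+$ on $\partial B_{r_1}$ are exactly $\beta_n^m:=b_n^m\, j_n(k\sqrt{q_0}\,r_1)$. Substituting the expression for $b_n^m$ from \eqref{equ:4.9} and using the Wronskian relation, which at $t=kr_1$ gives $j_n(kr_1)h_n^{(1)\prime}(kr_1)-h_n^{(1)}(kr_1)j_n'(kr_1)=i/(k^2r_1^2)$, I would simplify
\begin{equation*}
\beta_n^m=\frac{4\pi\, i^{\,n+1}\,\varepsilon\,\overline{Y_n^m(d)}}{k^2r_1^2\,h_n^{(1)\prime}(kr_1)\left[\varepsilon-\dfrac{h_n^{(1)}(kr_1)}{h_n^{(1)\prime}(kr_1)}\left(\dfrac{n}{kr_1}-\sqrt{q_0}\,\dfrac{j_{n+1}(k\sqrt{q_0}\,r_1)}{j_n(k\sqrt{q_0}\,r_1)}\right)\right]}.
\end{equation*}
This is the exact analogue of the quantity estimated in the proof of Theorem~\ref{thm:4.1}, the only change being that the factor $1/h_n^{(1)}(kr_1)$ there is now $1/h_n^{(1)\prime}(kr_1)$.

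For the per-mode asymptotics I would follow the same chain as in Theorem~\ref{thm:4.1}: rewrite $\sqrt{q_0}\,j_n'/j_n$ through the recurrence $P_n'(z)=-P_{n+1}(z)+\frac{n}{z}P_n(z)$, insert the large-order asymptotics \eqref{equ:4.15}, which give $j_{n+1}/j_n\sim k\sqrt{q_0}\,r_1/(2n+3)$ and $h_n^{(1)}/h_n^{(1)\prime}\sim -kr_1/(n+1)$, and write $q_0=\varepsilon\eta_0+i\tau_0$. The bracketed denominator then takes the form $a_n\varepsilon-ib_n$ (up to controlled lower-order terms) with $a_n=1-\frac{k^2r_1^2\eta_0}{(2n+3)(n+1)}$ and $b_n=\frac{k^2r_1^2\tau_0}{(2n+3)(n+1)}$, and the elementary bound $|a_n\varepsilon-ib_n|=\sqrt{a_n^2\varepsilon^2+b_n^2}\ge\sqrt{2a_nb_n}\,\varepsilon^{1/2}$ produces, for each mode, the behaviour $|\beta_n^m|=C_n\,\varepsilon^{1/2}\,|Y_n^m(d)|+\mathcal{O}(\varepsilon)$ with $C_n=\frac{4\pi}{k^2r_1^2\,|h_n^{(1)\prime}(kr_1)|\,\sqrt{2a_nb_n}}$, exactly as the per-mode term extracted in Theorem~\ref{thm:4.1}.

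Finally I would assemble the Sobolev norm. Recalling that $\|g\|_{H^{1/2}(\partial B_{r_1})}^2=\sum_{n\ge0}(1+n(n+1))^{1/2}\sum_{m=-n}^n|g_n^m|^2$ for $g=\sum_{n,m}g_n^m Y_n^m$, and using the addition theorem $\sum_{m=-n}^n|Y_n^m(d)|^2=\frac{2n+1}{4\pi}$ to collapse the $m$-sum, the per-mode estimate yields
\begin{equation*}
\|u_\varepsilon^+\|_{H^{1/2}(\partial B_{r_1})}^2=\varepsilon\sum_{n\ge0}(1+n(n+1))^{1/2}\,\frac{4\pi(2n+1)}{k^4r_1^4\,|h_n^{(1)\prime}(kr_1)|^2\,2a_nb_n}+\mathcal{O}(\varepsilon^2),
\end{equation*}
so that taking square roots gives $\|u_\varepsilon^+\|_{H^{1/2}(\partial B_{r_1})}=C_v\varepsilon^{1/2}+\mathcal{O}(\varepsilon)$ with $C_v$ the square root of the displayed series, depending only on $k,r_1,d,\eta_0,\tau_0$. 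I expect the main obstacle to be the control of this series: the Sobolev weight $(1+n(n+1))^{1/2}\sim n$, the factor $(2n+1)$, and $1/(a_nb_n)\sim n^3$ all grow only polynomially, whereas \eqref{equ:4.15} forces $1/|h_n^{(1)\prime}(kr_1)|^2$ to decay like $(kr_1)^{2n}/[(2n-1)!!]^2$, so absolute convergence is immediate once the asymptotics are made uniform in $n$; the genuinely delicate point is to verify that replacing the exact denominator and Bessel-function ratios by their \eqref{equ:4.15}-asymptotics costs only $\mathcal{O}(\varepsilon^2)$ in the squared norm uniformly in $n$, which is what turns the per-mode bound into the stated asymptotic identity and pins down the leading coefficient $C_v$.
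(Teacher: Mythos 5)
Your proposal follows essentially the same route as the paper's proof: identify the trace coefficients as $b_n^m j_n(k\sqrt{q_0}r_1)$, simplify $b_n^m$ via the Wronskian, reduce the denominator to the form $a_n\varepsilon-ib_n$ through the recurrence and the large-order asymptotics \eqref{equ:4.15}, apply the bound $|a_n\varepsilon-ib_n|\geq\sqrt{2a_nb_n\varepsilon}$, and sum the weighted squares to read off $C_v$. The only differences are cosmetic (you collapse the $m$-sum with the addition theorem and flag the uniform-in-$n$ convergence issue explicitly, which the paper glosses over), so the argument matches.
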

\begin{proof}\vspace{-2.6mm}From (\ref{equ:4.9}) and the Wronskian, we know that
\begin{equation*}\begin{aligned} b_n^m =&\frac{4\pi i^{n}\varepsilon\overline{Y_{n}^{m}(d)}\Big(j_{n}(kr_{1})\left(j_{n}^{\prime}(kr_{1})+iy_{n}^{\prime}(kr_{1})\right)-\left(j_{n}(kr_{1})+iy_{n}(kr_{1})\right)j_{n}^{\prime}(kr_{1})\Big)}{-\sqrt{q_{0}}j_{n}^{\prime}(k\sqrt{q_{0}}r_{1})h_{n}^{(1)}(kr_{1})+\varepsilon j_{n}(k\sqrt{q_{0}}r_{1})h_{n}^{(1)^{\prime}}(kr_{1})}\\
	=&\frac{4\pi\varepsilon i^{n+1}\overline{Y_{n}^{m}(d)}\left(j_{n}(kr_{1})y_{n}^{\prime}(kr_{1})-j_{n}^{\prime}(kr_{1})y_{n}(kr_{1})\right)}{-\sqrt{q_{0}}j_{n}^{\prime}(k\sqrt{q_{0}}r_{1})h_{n}^{(1)}(kr_{1})+\varepsilon j_{n}(k\sqrt{q_{0}}r_{1})h_{n}^{(1)^{\prime}}(kr_{1})}\\
	=&\frac{4\pi i^{n+1}\varepsilon\overline{Y_{n}^{m}(d)}}{k^2r^2_{1}\left(-\sqrt{q_{0}}j_{n}^{\prime}(k\sqrt{q_{0}}r_{1})h_{n}^{(1)}(kr_{1})+\varepsilon j_{n}(k\sqrt{q_{0}}r_{1})h_{n}^{(1)^{\prime}}(kr_{1})\right)}\\
	=&\frac{4\pi i^{n+1}\overline{Y_{n}^{m}(d)}}{k^2r^2_{1}(-\sqrt{q_{0}}\frac{\mathrm j_n'(k\sqrt{q_0}r_1)}{\mathrm j_n(k\sqrt{q_0}r_1)}\frac{h_{n}^{(1)}(kr_{1})}{h_{n}^{(1)^{\prime}}(kr_{1})}+\varepsilon)}\frac{\varepsilon}{j_{n}(k\sqrt{q_{0}}r_{1})h_{n}^{(1)^{\prime}}(kr_{1})}.
	\end{aligned}\end{equation*}
	\vspace{-1.8mm} Based on the asymptotic behavior of  $h_{n}^{(1)}(kr_{1})$ in (\ref{equ:4.15}), we obtain
	\begin{equation*}\begin{aligned}\left|b_n^{m}j_n(k\sqrt{q_{0}}r_{1})\right|=&\left| \frac{4\pi i^{n}\overline{Y_{n}^{m}(d)}}{k^2r^2_{1}h_{n}^{(1)^{\prime}}(kr_{1})}\frac{i\varepsilon}{-\sqrt{q_{0}}\left(\frac{n}{k\sqrt{q_{0}r_{1}}}-\frac{kr_{1}\sqrt{q_{0}}}{2n+3}\right)(\frac{-kr_{1}}{n+1})+\varepsilon }       \right|
	\\=&\left| \frac{4\pi i^{n}\overline{Y_{n}^{m}(d)}}{{k^2r^2_{1}h_{n}^{(1)^{\prime}}(kr_{1})}}\frac{i\varepsilon}{\varepsilon\left(1-\frac{k^{2}r_{1}^{2}\eta_0}{(2n+3)(n+1)}\right)-\frac{i\tau_0k^2r^2_{1}}{(2n+3)(n+1)}}\right|
	\\=&  \left| \frac{4\pi i^{n}\overline{Y_{n}^{m}(d)}}{{k^2r^2_{1}h_{n}^{(1)^{\prime}}(kr_{1})}}\right| \left|\frac{i\varepsilon}{a\varepsilon -ib}\right|
	\\\leq&\left|\frac{4\pi i^{n}\overline{Y_{n}^{m}(d)}}{{k^2r^2_{1}h_{n}^{(1)^{\prime}}(kr_{1})}}\right| \frac{\varepsilon}{\sqrt{2\varepsilon ab}}\\\leq&\left|\frac{4\pi i^{n}\overline{Y_{n}^{m}(d)}}{{k^2r^2_{1}h_{n}^{(1)^{\prime}}(kr_{1})}}\right| \frac{1}{\sqrt{2ab}}\varepsilon^\frac{1}{2}, 
    \end{aligned}\end{equation*}
	where $a=1-\frac{k^{2}r_{1}^{2}\eta_0}{(2n+3)(n+1)}$ and $b=\frac{\tau_0k^2r^2_{1}}{(2n+3)(n+1)}$. Namely, \begin{equation*}\begin{aligned}
	\left|b_n^{m}j_n(k\sqrt{q_{0}}r_{1})\right|=\left|\frac{4\pi i^{n}\overline{Y_{n}^{m}(d)}}{{k^2r^2_{1}h_{n}^{(1)^{\prime}}(kr_{1})}}\right| \frac{1}{\sqrt{2ab}}\varepsilon^{1/2}+\mathcal{O}(\varepsilon). 
    \end{aligned}\end{equation*}
	According to \cite{ref4}, through direct calculations we obtain\begin{equation*}\begin{aligned}
	\left\|u_{\varepsilon}^{+}\right\|_{H^{1/2}(\partial B_{r_{1}})}=\Bigg(\sum_{n=0}^\infty\sum_{m=-n}^n\left(1+n^2\right)^{1/2}|b_n^mj_n(k\sqrt{q_0}r_1)|^2\Bigg)^{1/2}.
    \end{aligned}\end{equation*}
    Besides, let $h_v$=$\sum\limits_{n=0}^\infty\sum\limits_{m=-n}^n\left(1+n^2\right)^{1/2}\left|\frac{4\pi i^{n}\overline{Y_{n}^{m}(d)}}{{k^2r_{1}^2h_{n}^{(1)^{\prime}}(kr_{1})}}\right|^2 \frac{1}{2ab}$ in the same time, we can choose $C_{v}$=$\sqrt{h_v}$.
	\par Thus we complete the proof of Theorem \ref{thm:4.2}.\end{proof}
\section{Appendix}
This section focuses on analyzing the well-posedness of the scattering problem \eqref{equ:1.5}, which is essential for proving Theorem \ref{thm:1.1}. To establish the uniqueness of the solution to \eqref{equ:1.5}, we refer to {\it Step I} in Lemma \ref{lem:2.3}, which provides a similar proof and shall not be repeated here. Next, we turn our attention to demonstrating both the existence of the solution and its continuous dependence on the input data. Our approach is based on Fredholm theory.
\begin{proof}Set	$$\gamma_{\varepsilon(x)}=\begin{cases}\varepsilon^{-1}&x\in{G},\\[1.5mm]A(x)&x\in\Omega\backslash\overline{G}\end{cases}\quad\mbox{and}\quad q_{\varepsilon(x)}=\begin{cases}\eta_{0}+i\varepsilon^{-1}\tau_{0}&x\in{G},\\[1.5ex]q(x)&x\in\Omega\backslash\overline{G}.\end{cases}$$ 
Then we can obtain \begin{equation}
\begin{cases}\nabla\cdot(\gamma_{\varepsilon(x)}\nabla u_{\varepsilon})+k^{2} q_{\varepsilon(x)}u_{\varepsilon}=0&\text{in}~ \Omega,\vspace*{1mm}\\
	\Delta u_{\varepsilon}^{s}+k^{2}u_{\varepsilon}^{s}=f&\text{in} ~\mathbb{R}^{N}\setminus\overline{\Omega},\vspace*{1mm}\\
	u_{\varepsilon}^{-}=u_{\varepsilon}^{+},\quad\varepsilon^{-1} \frac{\partial u_{\varepsilon}^{-}}{\partial\nu}=\frac{\partial u_{\varepsilon}^{+}}{\partial\nu_{A}}&\text{on}~ \partial G,\vspace*{2mm}\\
	u_{\varepsilon}^{-}=u_{\varepsilon}^{s}+g_1,\quad \frac{\partial u_{\varepsilon}^{-}}{\partial\nu_{A}}=\frac{\partial u_{\varepsilon}^{s}}{\partial\nu}+g_2&\text{on}~ \partial\Omega,\vspace*{2mm}\\
	\lim\limits_{ |x|\to\infty} ~|x|^{(N-1)/2}\left\{\frac{\partial u_{\varepsilon}^{s}}{\partial|x|}-iku_{\varepsilon}^{s}\right\}=0,\label{equ:5.1}
\end{cases}
\end{equation} where  $g_1\in H^{\frac{1}{2}}(\partial\Omega)$ and $g_2\in H^{-\frac{1}{2}}(\partial\Omega)$.  
By taking a proper truncation of $\mathbb{R}^3\backslash\overline{G}$, we obtain
\begin{equation}
\begin{cases}\nabla\cdot(\gamma_{\varepsilon(x)}\nabla u_{\varepsilon})+k^{2} q_{\varepsilon(x)}u_{\varepsilon}=0&\text{in}~ \Omega,\vspace*{1mm} \\
	\Delta u_{\varepsilon}^{s}+k^{2}u_{\varepsilon}^{s}=f&\text{in} ~B_r\setminus\overline{\Omega}, \vspace*{1mm}\\
	u_{\varepsilon}^{-}=u_{\varepsilon}^{+},\quad\varepsilon^{-1} \frac{\partial u_{\varepsilon}^{-}}{\partial\nu}= \frac{\partial u_{\varepsilon}^{+}}{\partial\nu_{A}}&\text{on}~ \partial G, \vspace*{2mm}\\
	u_{\varepsilon}^{-}=u_{\varepsilon}^{s}+g_1,\quad \frac{\partial u_{\varepsilon}^{-}}{\partial\nu_{A}}=\frac{\partial u_{\varepsilon}^{s}}{\partial\nu}+g_2&\text{on}~ \partial\Omega, \vspace*{2mm}\\
	\frac{\partial u_{\varepsilon}^{s}}{\partial\nu}=\Lambda u_{\varepsilon}^s&\text{on}~ \partial B_r. \label{equ:5.2}
\end{cases}
\end{equation} 
Clearly,  (\ref{equ:5.1}) and (\ref{equ:5.2}) are equivalent. Let $w=u_{\varepsilon}$ in $\Omega$ and $w=u_{\varepsilon}^s+\tilde{v}$ in $B_r\backslash\overline\Omega$. Here, we assume that $k^2$ is not a Dirichlet eigenvalue of $-\bigtriangleup$ in $B_r\backslash\overline\Omega$ and that $\tilde{v}\in H^1(B_r\backslash\overline{\Omega})$ satisfies 	
\begin{equation*}
\begin{cases}\Delta\widetilde{v}+k^2\widetilde{v}=0&\text{in~}B_r\backslash\overline{\Omega},\\[2pt]
	\widetilde{v}=g_1&\text{on }\partial\Omega,\\[2pt]
	\widetilde{v}=0&\text{on}~\partial B_r.
\end{cases}
\end{equation*}
Furthermore, we can express (\ref{equ:5.2}) as
\begin{equation*}
\begin{cases}
	\nabla\cdot(\gamma_{\varepsilon(x)}\nabla w)+k^{2}q_{\varepsilon(x)}w=0&\text{in}~ \Omega,\vspace*{2mm}\\
	\Delta w+k^{2}w=f&\text{in} ~B_r\setminus\overline{\Omega},\vspace*{2mm}\\
	w^{-}=w^{+},\quad\varepsilon^{-1} \frac{\partial w^{-}}{\partial\nu}=\frac{\partial w^{+}}{\partial\nu_{A}}&\text{on}~\partial G, \vspace*{2mm}\\
	w^{-}=w^{+},\quad\frac{\partial w^{-}}{\partial\nu_{A}}=\frac{\partial w^{+}}{\partial\nu}-\frac{\partial \tilde{v}}{\partial\nu}+g_2&\text{on}~ \partial\Omega,\vspace*{2mm}\\
	\frac{\partial w}{\partial\nu}=\Lambda w+\frac{\partial \tilde{v}}{\partial\nu}&\text{on}~ \partial B_r.
\end{cases}
\end{equation*}
As discussed regarding the $DtN$ operator for $\Lambda$ and $\Lambda_{0}$ in Section \ref{sec:3},  it follows immediately that for any $\varphi\in H^{1}(B_r)$, there exists $w\in H^{1}(B_r)$ such that
\begin{equation*} 
a_1(w,\varphi)+a_2(w,\varphi)=\mathcal{F}(\varphi),
\end{equation*}
where 
\begin{align*}
a_1(w,\varphi):=&\int_G{\varepsilon^{-1}\nabla w\cdot\nabla\bar{\varphi}}\,\mathrm{d}x+\int_{\Omega\backslash\overline G}\,\nabla\bar{\varphi}\cdot A \nabla w\,\mathrm{d}x+\int_{\Omega\setminus\overline{G}}\, k^2w\bar{\varphi}\,\mathrm{d}x\\&+\int_{B_r\setminus\overline{\Omega}}\,\nabla w\cdot\nabla\bar{\varphi}\,\mathrm{d}x+\int_{B_r\setminus\overline{\Omega}}\,k^2w\bar{\varphi}\,\mathrm{d}x-\int_{\partial B_r}\Lambda_0w\,\bar{\varphi}\,\mathrm{d}s,\\ a_2(w,\varphi):=&-\int_{\Omega\setminus\overline{G}}k^2(q+1)w\bar{\varphi}\,\mathrm{d}x-\int_{G}k^2(\eta_{0}+i\varepsilon^{-1}\tau_{0})w\bar{\varphi}\,\mathrm{d}x\\&-2\int_{B_{r}\setminus\overline{\Omega}} k^{2}w\bar{\varphi}\,\mathrm{d}x-\int_{\partial B_{r}} (\Lambda-\Lambda_{0})w\bar{\varphi}\,\mathrm{d}s,\\
\mathcal{F}(\varphi):=&\int_{\partial\Omega}(g_{2}-\frac{\partial\tilde{v}}{\partial\nu})\bar{\varphi}\,\mathrm{d}s+\int_{\partial B_{r}}\frac{\partial\tilde{v}}{\partial\nu}\bar{\varphi}\,\mathrm{d}s-\int_{B_r\setminus\overline{\Omega}}f\bar{\varphi}\,\mathrm{d}x.
\end{align*}
Similarly, by employing the same proof method as in Lemma \ref{lem:2.3}, we can identify the bounded operator $T$  and the compact operators $T_1$ and $T_2$, such that \begin{equation*}
\begin{aligned}
	a_1(w,\varphi)=\left \langle Tw,\varphi \right \rangle, 
    \end{aligned}\end{equation*}
	with		
\begin{align*} 
a_3(w,\varphi):&=\int_{\Omega\setminus\overline{G}}k^2(q+1)w\bar{\varphi}\,\mathrm{d}x+\int_{G}k^2(\eta_{0}+i\varepsilon^{-1}\tau_{0})w\bar{\varphi}\,\mathrm{d}x+2\int_{B_{r}\setminus\overline{\Omega}} k^{2}w\bar{\varphi}\,\mathrm{d}x=\left \langle T_1w,\varphi \right \rangle,\\
\vspace{-3mm}a_4(w,\varphi):&=\int_{\partial B_{r}} (\Lambda-\Lambda_{0})w\bar{\varphi}\,\mathrm{d}s=\left \langle T_2w,\varphi \right \rangle.
\end{align*}
In this context, $a_1(\cdot,\cdot)$, $a_2(\cdot,\cdot)$, $a_3(\cdot,\cdot)$, and $a_4(\cdot,\cdot)$ are continuous on $H^1(B_r)\times H^1(B_r)$, the notation $\left \langle\cdot,\cdot  \right \rangle $ denotes the inner product in 
$H^1(B_r)$. Since ${T}$ is bounded and invertible, ${T}_{1}+{T}_{2}$ is compact, therefore, ${T}-({T}_1+{T}_2)$ is a Fredholm operator of index zero. Furthermore, $\left({T}-({T}_1+{T}_2)\right)^{-1}$  is bounded. It follows that
\begin{equation*}\begin{aligned}\left|\mathcal{F}(\varphi)\right|\leq C_1'\left(\|g_2\|_{H^{-1/2}(\partial\Omega)}+\|g_1\|_{H^{1/2}(\partial\Omega)}+\|f\|_{L^{2}(B_{r_0}\setminus\overline\Omega)}\right)\|\varphi\|_{H^1({B_r\backslash\overline{G}})}.
\end{aligned}\end{equation*} Directly, we obtain
\begin{equation*}\begin{aligned} \left\|u_\varepsilon\right\|_{H^1(\Omega)}\leq C_2'\left(\|g_2\|_{H^{-1/2}(\partial\Omega)}+\|g_1\|_{H^{1/2}(\partial\Omega)}+\|f\|_{L^{2}(B_{r_0}\setminus\overline\Omega)}\right), 
\end{aligned}\end{equation*}
where $C_1'$ and $C_2'$ are positive constants independent on $\varepsilon$. Finally, we finish the proof of the well-posedness of the PDE system (\ref{equ:1.5}).
\end{proof}

\noindent\textbf{Acknowledgment.} 
The work of H. Diao is supported by the National Natural Science Foundation of China  (No. 12371422) and the Fundamental Research Funds for the Central Universities, JLU.  The work of Q. Meng is supported by a fellowship award from the Research Grants Council of the Hong Kong Special Administrative Region, China (Project No. CityU PDFS2324-1S09). Also the work of Q. Meng is supported by the Hong Kong RGC General Research Funds (projects 11311122, 11300821, and 11303125),  the NSFC/RGC Joint Research Fund (project N\_CityU101/21), and the France-Hong Kong ANR/RGC Joint Research Grant, A-CityU203/19..

\end{document}